\newtheorem{thm}{Theorem}[section]
\newtheorem{lem}[thm]{Lemma}
\newtheorem{Claim}{Claim}
\theoremstyle{definition}
\theoremstyle{remark}
\newcommand{\R}{\mathbb R}
\newcommand{\F}{\mathbb F}
\newcommand{\E}{\mathbb E}
\newcommand{\OPT}{\mathrm{OPT}}
\newcommand{\ALG}{\mathrm{ALG}}
\newcommand{\I}{\mathcal{I}}
\newcommand{\M}{\mathcal M}
\newcommand{\PP}{\mathcal P}
\newcommand{\D}{\mathcal D}
\DeclareMathOperator*{\spa}{span}
\DeclareMathOperator*{\Bin}{Bin}
\begin{document}
\title{Matroid Secretary Problem in the Random Assignment Model\footnote{This work was partially supported by NSF
  contract CCF-0829878 and by ONR grant N00014-05-1-0148.}}
\author{Jos\'e A. Soto\thanks{MIT,
Dept.~of Math., Cambridge, MA 02139. \texttt{jsoto@math.mit.edu}.}}
\date{}

\maketitle
\begin{abstract}
In the \emph{Matroid Secretary Problem}, introduced by Babaioff et al.~[SODA 2007], the elements of a given matroid are presented to an online algorithm in random order. When an element is revealed, the algorithm learns its weight and decides whether or not to select it under the restriction that the selected elements form an independent set in the matroid. The objective is to maximize the total weight of the chosen elements. In the most studied version of this problem, the algorithm has no information about the weights beforehand. We refer to this as the \emph{zero information model}. In this paper we study a different model, also proposed by Babaioff et al., in which the relative order of the weights is random in the matroid. To be precise, in the \emph{random assignment model}, an adversary selects a collection of weights that are randomly assigned to the elements of the matroid. Later, the elements are revealed to the algorithm in a random order independent of the assignment.

Our main result is the first constant competitive algorithm for the matroid secretary problem in the random assignment model. This solves an open question of Babaioff et al. Our algorithm achieves a competitive ratio of $2e^2/(e-1)$. It exploits the notion of \emph{principal partition} of a matroid, its decomposition into \emph{uniformly dense minors}, and a $2e$-competitive algorithm for uniformly dense matroids we also develop. As additional results, we present simple constant competitive algorithms in the zero information model for various classes of matroids including cographic, low density and the case when every element is in a small cocircuit. In the same model, we also give a $ke$-competitive algorithm for $k$-column sparse linear matroids, and a new $O(\log r)$-competitive algorithm for general matroids of rank $r$ which only uses the relative order of the weights seen and not their numerical value, as previously needed.
\end{abstract}
\thispagestyle{empty}

\newpage
\setcounter{page}{1}
\section{Introduction}
In the simplest form of the \emph{secretary problem}, an employer wants to select the best secretary among $n$ candidates arriving in random order. Once a secretary is interviewed, the employer must decide immediately whether to accept the candidate or not and that decision is final. Lindley \cite{lindley_dynamic_1961} and Dynkin \cite{dynkin_optimum_1963} have shown that sampling the first $\lfloor n/e\rfloor$ candidates and then selecting the first one whose value is higher than all the sampled ones gives a probability of at least $1/e$ of selecting the best secretary and that no algorithm can beat this constant. An important generalization of this problem with many applications is known as the \emph{multiple choice secretary problem} (see \cite{kleinberg_multiple-choice_2005}). In this problem we wish to select a set of at most~$k$ secretaries from a pool of $n$ applicants and the objective is to select a group of combined value as high as possible.

Babaioff et al.~\cite{babaioff_online_2008} introduce the \emph{generalized secretary problem} as a natural class of extensions of the previous problem in which the set returned by the algorithm must obey some combinatorial restriction. In this setting, a finite set $E$ with hidden nonnegative weights and a collection of subsets $\I \subseteq 2^E$ closed under inclusion are given. The collection $\I$ describes the sets of elements that can be simultaneously accepted (these are the \emph{feasible sets} or the \emph{domain} of the problem). The elements of $E$ are presented to an online algorithm in random order. When an element is revealed, the algorithm learns its weight and decides whether or not to accept it under the restriction that the set of accepted elements is feasible. This decision is irreversible and it must be taken before the next element is revealed. The objective is to output a feasible set of maximum total weight.

We remark that other lines of generalizations of the multiple choice secretary problem having different objective functions have also been considered. These generalizations include, among others, minimizing the sum of the relative ranks of the selected elements (studied by Ajtai et al.~\cite{ajtai_1995}), the weighted and time discounted secretary problems of Babaioff et al.~\cite{babaioff_secretary_2009}, the $J$-choice $K$-best secretary problem studied by Buchbinder et al.~\cite{buchbinder2010secretary} and the submodular secretary problem of Bateni et al.~\cite{bateni-submodular2010}.

%%% Maybe Add an extra reference %%%

The generalized secretary problem is of interest due to its connection to online auctions. In both the original and multiple choice secretary problems, we can regard the algorithm as an auctioneer having one or many identical items, and the secretaries as agents arriving at random times, each one having a different valuation for the item. The goal of the algorithm is to assign the items to the agents as they arrive while maximizing the total social welfare. In more complex situations, the algorithm may be considered to have access to a collection of goods that it wishes to assign to agents, subject to some restrictions. In many cases, these restrictions can be modeled by matroid constraints. For that reason, the \emph{matroid secretary problem}, in which the feasible sets are the independent sets of a matroid is of special interest (see e.g. a survey of Babaioff et al.~\cite{babaioff_online_2008}).

Notice that the difficulty of the problem changes depending on the information we know beforehand about the weights. We recognize four different models in increasing order of difficulty. \begin{compactitem}
  \item \textbf{Full information model:} The weights are chosen i.i.d. from a known distribution.
  \item \textbf{Partial information model:} The weights are chosen i.i.d. from an unknown distribution.
  \item \textbf{Random assignment model:} An adversary chooses a list of nonnegative weights, which are then assigned to the elements using a uniform random one-to-one correspondence, which is independent of the random order at which the elements are revealed.
  \item \textbf{Zero information model:} An adversary assigns the weights arbitrarily.
\end{compactitem}

The difficulty may also change depending on whether the algorithm learns the actual weight of the elements or just the relative order of the weights seen so far. See the surveys of Freeman~\cite{freeman1983secretary} and Ferguson \cite{ferguson_who_1989} for references and variations of these models in the classical secretary problem. Note that for both the classical problem and the multiple choice problem, the random assignment and the zero information models coincide.

There has been a significant amount of work on the matroid secretary problem under zero information. Constant competitive algorithms are known for partition matroids (this corresponds to the classical~\cite{lindley_dynamic_1961,dynkin_optimum_1963} and multiple choice secretary problem~\cite{kleinberg_multiple-choice_2005,babaioff_knapsack_2007}) and also for graphic and transversal matroids~\cite{babaioff_matroids_2007,dimitrov_competitive_2008,korula2009algorithms,babaioff_secretary_2009}. It is also known~\cite{babaioff_matroids_2007} that if a matroid admits a constant competitive algorithm under zero information, then so do its restrictions and truncations. For general matroids, the best algorithm known so far, due to Babaioff et al.~\cite{babaioff_matroids_2007}, is $O(\log r)$-competitive, where $r$ is the rank of the matroid.

Non-matroidal domains have also been considered in the literature. Babaioff et al.~\cite{babaioff_knapsack_2007} show a $10e$-competitive algorithms for knapsack domains even in the case where both the weights and lengths are revealed online. Korula and Pal~\cite{korula2009algorithms} give constant competitive algorithms for some cases of intersection of partition matroids under zero information, specifically for matchings in graphs and hypergraphs where the edges have constant size.

Not every domain admits constant competitive algorithms. Babaioff et al. \cite{babaioff_matroids_2007} have shown a particular domain for which no algorithm has a competitive ratio smaller than $o(\log n/\log\log n)$ even in the full information model. However, matroid domains have the following special property: If we are allowed to reject elements which have been previously accepted, while keeping at every moment an independent set, then it is possible to output the optimum independent set no matter in which order the elements are presented. This intuition motivated Babaioff et al.~\cite{babaioff_matroids_2007, babaioff_online_2008} to conjecture that the matroid secretary problem admits a constant competitive algorithm even under zero information. According to these authors, this question is also non-trivial for the random assignment model, and for the model in which the \emph{order} the elements are presented is adversarial, but the weights are randomly assigned from a pool of hidden values.

\begin{paragraph}{Main Result}
In this paper, we answer the last question affirmatively for the random assignment model, exhibiting a $2e^2/(e-1)$-competitive algorithm for any matroid domain. We remark here that our results also apply to the partial and full information settings for which, as far as we know, no previous results existed. Our algorithm uses as building block an algorithm for uniformly dense matroids we also develop.

Uniformly dense matroids are matroids for which the \emph{density} of a set, that is, the ratio of its cardinality to its rank, is at most the density of the entire ground set. The simplest examples of these are precisely the uniform matroids. Uniformly dense matroids and uniform matroids of the same rank over the same ground set behave similarly, in the sense that the distribution of the rank of a random set is similar for both matroids. We use this fact to devise a $2e$-competitive algorithm for these matroids in the random assignment model. In order to extend this algorithm to general matroids we exploit some notions coming from the \emph{theory of principal partitions} of a matroid, particularly its \emph{principal sequence}. Roughly speaking, the principal sequence of a matroid $\M$ is a decomposition of its ground set into a sequence of parts, each of which is the underlying set of a uniformly dense minor of $\M$. By employing independently the previous algorithm in each of these minors, we obtain an algorithm that returns an independent set of $\M$, while loosing only an extra factor of $1-1/e$ on its competitive ratio.
\end{paragraph}

\begin{paragraph}{Additional Results}
We also develop simple constant competitive algorithms under zero information for various classes of matroids including cographic, low density and the case when every element is in a small cocircuit. Also, we show a $ke$-competitive algorithm for the case when the matroid is representable by a matrix in which each column has at most $k$ non-zero entries. This result generalizes the $2e$-competitive algorithm for graphic matroids of Korula and Pal~\cite{korula2009algorithms}. Finally, we give a new $O(\log r)$-competitive algorithm for general matroids. Unlike the previous algorithm of Babaioff et al.~\cite{babaioff_matroids_2007}, our algorithm does not use the numerical value of the weights. It only needs the ability to make comparisons among seen elements.
\end{paragraph}
\section{Matroid Secretary Problem in the Random Assignment Model}\label{section2}

In this paper, we assume familiarity with basic concepts in matroid theory. For an introduction and specific results, we refer to Oxley's book \cite{oxley2006matroid}.

Consider a matroid $\M=(E,\I)$ with ground set $E=\{e_1,\ldots,e_n\}$. An adversary selects a set $W$ of $n$ nonnegative weights $w_1\geq \dots \geq w_n \geq 0$, which are assigned to the elements of the matroid via a random permutation $\sigma:[n]\to[n]$, i.e., the weight function of the elements $w:E\to W$ is such that $w(e_{\sigma(i)})=w_i$. The elements are then presented to an online algorithm via a random order $\pi:[n]\to[n]$. When an element is presented, the algorithm must decide whether to add it or not to the solution set under the condition that this set is independent in $\M$ at all times. The objective is to output a solution set $\ALG$ whose value $w(\ALG)=\sum_{e \in \ALG} w(e)$ is as high as possible.

We further assume that when the $i$-th element of the stream, $e_{\pi(i)}$, is presented, \emph{the algorithm only learns the relative order of the weight with respect to the previously seen ones}. This is, it can compare $w(e_{\pi(j)})$ with $w(e_{\pi(k)})$ for all $j,k\leq i$, but it can not use the actual numerical values of the weights. Without loss of generality, we can assume that there are no ties in $W$, because otherwise we can break them using a new random permutation $\tau$; if the comparison between two elements seen gives a tie, then we consider heavier the one having larger $\tau$-value.

For a given permutation $\sigma$, let $\OPT_\M(\sigma)$ be the the lexicographic first base of $\M$ according to the permutation $\sigma$. In other words, $\OPT_\M(\sigma)$ is the set obtained by applying the \emph{greedy procedure} that selects an element if it can be added to the previously selected ones while preserving independence in $\M$, on the sequence $e_{\sigma(1)}, e_{\sigma(2)}, \dots, e_{\sigma(n)}$. Standard matroid arguments imply that $\OPT_\M(\sigma)$ is a maximum independent set with respect to any weight function $v$ for which $v(e_{\sigma(1)})\geq \dots \geq v(e_{\sigma(n)})\geq 0.$ In particular, this is true for the weight function $w$ defined before. We will drop the subindex $\M$ in $\OPT_\M(\sigma)$ whenever there is no possible confusion.

We say that an online algorithm returning an independent set $\ALG$ is $\alpha$-competitive if for any selection of nonnegative weights $W$ given by the adversary, $\alpha\E_{\pi,\sigma}[w(\ALG)] \geq \E_{\sigma}[w(\OPT(\sigma))].$

As a particular case, consider the partial and full information models in which elements receive their weights independently from a fixed distribution $\D$ over the nonnegative reals. Since it is possible for the expected weight of the optimum and the expected weight of the set returned by the algorithm to be both infinite (for instance, if $\D$ has infinite mean), the concept of competitiveness for this scenario has to be slightly modified. We say that an algorithm returning an independent set $\ALG$ is $\alpha$-competitive in the partial or full information models if $\E_{\D,\pi}[\alpha w(\ALG) - w(\OPT)]\geq 0$.

We claim that any algorithm that is $\alpha$-competitive in the random assignment scenario is also $\alpha$-competitive in both the full and the partial information setting. To see this, consider an $\alpha$-competitive algorithm $\mathcal{A}$ under the former model, and apply to one of the latter. Note that the distribution of the maximum independent set is the same as the one obtained by first choosing a set $W$ of $n$ sample values from $\D$, and then assigning these values to the elements using a uniform random permutation~$\sigma$. For any realization of values of $W$ according to $\D$, algorithm $\mathcal{A}$ returns a set $\ALG$ such that $\E_{\pi,\sigma}[\alpha w(\ALG) - w(\OPT(\sigma))]\geq 0$. By taking expectation over the realizations of $W$ we prove the claim.

\section{Uniformly dense matroids}\label{section3}
Define the \emph{density} $\gamma(\M)$ of a \emph{loopless matroid}\footnote{A \emph{circuit} is a minimal non-independent set of a matroid. A \emph{loop} is an element $e$ such that $\{e\}$ is a circuit. A loopless matroid is a matroid having all singletons independent. } $\M=(E,\I)$ with rank function $r$, to be the maximum over all non-empty sets $|X|$, of the quantity $|X|/r(X)$.
The matroid $\M$ is \emph{uniformly dense} if $\gamma(\M)$ is attained by the entire ground set; that is, if $\frac{|X|}{r(X)}\leq \frac{|E|}{r(E)}$, for every non-empty $X \subseteq E$. Examples of uniformly dense matroids include uniform matroids, the graphic matroid of a complete graph, and all projective geometries $PG(r-1,\F)$. The following property of uniformly dense matroids is important for our analysis.
\begin{lem}\label{lem:randomgreedy}
Let $(x_1,\dots,x_j)$ be a sequence of different elements of a uniformly dense matroid chosen uniformly at random. The probability that element $x_j$ is selected by the greedy procedure on that sequence is at least $1-(j-1)/r$, where $r$ is the rank of the matroid.
\end{lem}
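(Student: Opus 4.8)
The plan is to translate the event ``$x_j$ is selected'' into a statement about the matroid closure, and then bound the size of a closure using uniform density. First I would recall that the greedy procedure applied to $x_1,\dots,x_{j-1}$ builds an independent set $I$ that spans $A:=\{x_1,\dots,x_{j-1}\}$, so $\operatorname{cl}(I)=\operatorname{cl}(A)$. Hence the next element $x_j$ is added by the greedy procedure run on $(x_1,\dots,x_j)$ if and only if $I\cup\{x_j\}$ is independent, i.e.\ if and only if $x_j\notin\operatorname{cl}(A)$. So the quantity to lower bound is $\Prob[\,x_j\notin\operatorname{cl}(\{x_1,\dots,x_{j-1}\})\,]$.

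Next I would condition on the unordered set $A=\{x_1,\dots,x_{j-1}\}$. Since $(x_1,\dots,x_j)$ is a uniformly random tuple of distinct elements, conditioned on $A$ the element $x_j$ is uniform over $E\setminus A$. As $A\subseteq\operatorname{cl}(A)$ and $|A|=j-1$, the choices of $x_j$ landing inside $\operatorname{cl}(A)$ number $|\operatorname{cl}(A)|-(j-1)$, so
\[
\Prob[\,x_j\notin\operatorname{cl}(A)\mid A\,]\;=\;\frac{|E|-|\operatorname{cl}(A)|}{|E|-(j-1)}.
\]
Now uniform density enters: $\operatorname{cl}(A)$ is a flat with $r(\operatorname{cl}(A))=r(A)\le|A|=j-1$, and applying $|X|/r(X)\le|E|/r$ to $X=\operatorname{cl}(A)$ gives $|\operatorname{cl}(A)|\le\frac{|E|}{r}(j-1)$. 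Substituting,
\[
\Prob[\,x_j\notin\operatorname{cl}(A)\mid A\,]\;\ge\;\frac{|E|-\frac{|E|}{r}(j-1)}{|E|-(j-1)}\;=\;\frac{|E|}{|E|-(j-1)}\cdot\frac{r-(j-1)}{r}.
\]

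Finally I would conclude with a trivial case split. If $j-1>r$ the claimed bound $1-(j-1)/r$ is negative and there is nothing to prove. If $j-1\le r$, then the first factor above is $\ge1$ and the second is $\ge0$, so the product is at least $(r-(j-1))/r=1-(j-1)/r$; since this holds for every $A$, averaging over $A$ proves the lemma. The only non-routine ingredient is the uniform-density bound $|\operatorname{cl}(A)|\le\frac{|E|}{r}\,r(A)$ on the flat $\operatorname{cl}(A)$, so that is where I expect the genuine content to be; the rest is bookkeeping about the greedy procedure together with an elementary inequality. I would also note that ``uniformly dense'' tacitly includes looplessness, which guarantees $r\ge1$ (so the denominators make sense, once $j\ge1$) and $r(\operatorname{cl}(A))=r(A)$, and that the case $j=1$ is covered as well, since then $\operatorname{cl}(A)=\operatorname{cl}(\emptyset)=\emptyset$ and the probability equals $1$.
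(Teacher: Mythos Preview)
Your proof is correct and follows essentially the same argument as the paper: both condition on $A=\{x_1,\dots,x_{j-1}\}$, write the selection probability as $(|E|-|\operatorname{cl}(A)|)/(|E|-(j-1))$, bound $|\operatorname{cl}(A)|\le \frac{|E|}{r}\,r(A)\le \frac{|E|}{r}(j-1)$ via uniform density and $r(A)\le |A|$, and finish with the elementary inequality $\frac{|E|}{|E|-(j-1)}\ge 1$. You are simply more explicit than the paper about the conditioning and about the trivial case $j-1>r$.
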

\begin{proof}
An element is selected by the greedy procedure only if it is outside the span of the previous elements. Denote by $A_i=\{x_1,\ldots,x_i\}$ to the set of the first $i$ elements of the sequence, and let $n$ be the number of elements of the matroid, then:
  \begin{align*}
    \Pr[x_j \text{ is selected}] &= \frac{n-\spa(A_{j-1})}{n-(j-1)} \geq \frac{n-r(A_{j-1})n/r}{n-(j-1)} \geq \frac{n-(j-1)n/r}{n-(j-1)} \geq 1-(j-1)/r,
  \end{align*}
  where the first inequality holds since the matroid is uniformly dense and the second holds because the rank of a set is always at most its cardinality.
\end{proof}

\subsection{A $2e$-competitive algorithm for uniformly dense matroids}

Consider the following algorithm for uniform matroids of rank $r$ due to Babaioff et al.~\cite{babaioff_knapsack_2007}: Maintain the set $T$ consisting on the $r$ heaviest elements seen so far (initialize this set with $r$ dummy elements that are considered to be lighter than everyone else). Observe the first $m=pn$ elements without adding them to the output; we refer to this set as the \emph{sample}. An element arriving later will be added to the output only if at the moment it is seen, it enters $T$ and the element that leaves $T$ is either in the sample or a dummy element. Babaioff et al.~have shown that this algorithm returns a set of at most $r$ elements and that by setting $p$ to be $1/e$, every element of the optimum is in the output of the algorithm with probability at least $1/e$, making this algorithm $e$-competitive for \emph{uniform} matroids even under zero information.

A slight modification of this algorithm is at least $2e$-competitive for uniformly dense matroids in the random assignment model. The full procedure is depicted in Algorithm \ref{algorithm}. The only differences with respect to the algorithm above are that \begin{inparaenum}[(i)]\item The number of elements sampled is given by a binomial distribution $\Bin(n,p)$ and \item Before adding an element to the output, we test if its addition maintains independence in the matroid\end{inparaenum}.

\begin{algorithm}[h!!!]
\caption{for uniformly dense matroids of $n$ elements and rank $r$ under random assignment.}

\label{algorithm}
\begin{algorithmic}[1]
%\Require{A stream of $n$ elements in a uniformly dense matroid of rank $r$, with weights selected according to the random assignment model.}
%\Ensure{An independent set $\ALG \in \I(\M)$.}
\State{Maintain a set $T$ containing the heaviest $r$ elements seen so far at every moment (initialize $T$ with $r$ dummy elements that are supposed to be lighter than every element in the stream).}
\State{$\ALG \gets \emptyset$.}
\State{Choose $m$ from the binomial distribution $\Bin(n,p)$.}
\State{Observe the first $m$ elements and denote this set as the sample.}
\For{each element $e$ arriving after the first $m$ elements}
\If{$e$ enters $T$ and the element leaving $T$ is in the sample or is a dummy element}
\State{Check if $\ALG \cup \{e\}$ is independent. If so, add $e$ to $\ALG$.}
\EndIf
\EndFor
\State{Return the set $\ALG$.}
\end{algorithmic}
\end{algorithm}

\begin{thm}\label{thm:algorithm1}
Let $\ALG$ be the set returned by Algorithm \ref{algorithm} when applied to a uniformly dense matroid $\M$ of rank $r$. Then
\begin{equation*}
\E_{\sigma,\pi}[w(\ALG(\sigma))] \geq (-p^2\ln p)\sum_{i=1}^r w_i \geq (-p^2\ln p)\E_{\sigma}[w(\OPT_\M(\sigma))].
\end{equation*}
In particular, by setting $p=e^{-1/2}$, we conclude that Algorithm \ref{algorithm} is $2e$-competitive for uniformly dense matroids in the random assignment model.
\end{thm}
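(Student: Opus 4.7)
The plan is to prove $\Pr[f_j \in \ALG] \geq -p^2 \ln p$ for every $j \in \{1,\ldots,r\}$, where $f_j := e_{\sigma(j)}$ carries the weight $w_j$. The second inequality in the theorem is immediate: $\OPT(\sigma)$ is independent (hence of cardinality at most $r$) and the weights are sorted decreasingly, so $\sum_{i=1}^r w_i$ dominates any sum of at most $r$ weights. The first inequality then follows by linearity: $\E[w(\ALG)] = \sum_j w_j \Pr[f_j \in \ALG] \geq -p^2 \ln p \sum_{j=1}^r w_j$. Setting $p = e^{-1/2}$ maximizes $-p^2 \ln p$ at $1/(2e)$, giving the $2e$-competitive ratio.

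I will switch to the equivalent independent-sampling model: assign each element $e$ an i.i.d.\ $U_e \sim \mathrm{Unif}[0,1]$, declare $e$ sampled iff $U_e \leq p$, and reveal elements in increasing $U_e$-order; this yields $|\text{sample}| \sim \Bin(n,p)$ and a uniform $\pi$. Fix $j$ and condition on $U_{f_j} = u$ with $u > p$, so that $f_j$ is non-sample. Since $j \leq r$, $f_j$ always enters $T$ upon arrival; the leaving element is sample or dummy iff either $f_j$ has fewer than $r$ predecessors or the $r$-th heaviest predecessor of $f_j$ is a sample. Because the weight assignment $\sigma$ and the variables $U_e$ are independent, whichever predecessor is the $r$-th heaviest has its $U$-value conditionally uniform on $[0,u]$ and so is sampled with probability exactly $p/u$. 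Therefore $\Pr[\text{Step 6}\mid u] \geq p/u$, echoing the Babaioff et al.\ analysis.

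For the Step-7 independence check, introduce $\rho := \sigma^{-1}\pi$; since $\sigma$ and $\pi$ are independent uniform permutations, so are $\sigma$ and $\rho$. The set of stream positions at which Step~6 triggers depends only on $\rho$, while conditional on $\rho$ the map $\pi = \sigma\rho$ is uniform, so the elements at those positions form a uniformly random ordered subset of $E$. Denoting by $\ALG_0$ the set of elements passing Step 6 and by $k-1 := |\ALG_0 \cap P_{f_j}|$ the number of Step-6 triggers before $f_j$ (with $P_{f_j}$ its stream-predecessors), Lemma~\ref{lem:randomgreedy} applied to the uniform random $k$-tuple ending at $f_j$, combined with the ground-set symmetry that allows exchanging the role of the last element of the sequence with any fixed element of $E$, gives $\Pr[f_j \notin \spa(\ALG \cap P_{f_j}) \mid \rho] \geq 1 - (k-1)/r$.

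The crux, and main technical obstacle, is then the conditional bound $\E[|\ALG_0 \cap P_{f_j}| \mid \text{Step 6}, u] \leq (r-1)(1-p/u)$. A simple invariant---each Step-6 trigger swaps a sample/dummy slot of $T$ for a non-sample one, while other events leave the sample/dummy count fixed---shows $|\ALG_0 \cap P_{f_j}|$ equals the number of non-sample elements currently in $T$ at $f_j$'s arrival. Because the weight assignment is independent of the $U_e$'s, each of the top-$r$ elements seen so far is independently non-sample with probability $(u-p)/u$; conditioning on the lightest of these being a sample (the Step-6 event, in the typical case of at least $r$ predecessors) fixes one indicator to zero, leaving the other $r-1$ i.i.d.\ $\mathrm{Bernoulli}((u-p)/u)$ and hence contributing an expected $(r-1)(1-p/u)$ non-samples. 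Plugging in, $\Pr[\text{Step 7} \mid \text{Step 6}, u] \geq 1 - (1-1/r)(1-p/u) = p/u + (1-p/u)/r \geq p/u \geq p$, and combining with the Step-6 estimate and integrating gives $\Pr[f_j \in \ALG] \geq \int_p^1 (p/u)\cdot p\,du = -p^2 \ln p$, as claimed.
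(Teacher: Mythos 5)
Your proposal is correct and follows the paper's strategy: decouple the "marking" process (Step 6), which is governed purely by the arrival-time/weight-rank process, from the random assignment of matroid elements to marked positions, then apply Lemma~\ref{lem:randomgreedy} to the latter. The $p/u$ bound for Step~6 is the same as the paper's Claim, and the $\sigma$-vs-$\rho$ independence plays the role of the paper's offline simulation in which elements are assigned to candidate weights uniformly at random.

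Where you depart from the paper is in the accounting for $f_j$'s position $k$ in the candidate list. The paper conditions on the event $\mathcal{E}_\ell$ that exactly $\ell$ of the other top-$r$ \emph{weights} are sampled, observes that the \emph{total} number of candidates is then at most $r-\ell$, and sums $(\ell+1)/r$ against a $\mathrm{Bin}(r-1,p)$ distribution. You instead use the conservation law ``$\#$(non-samples currently in $T$) $=\#$(Step-6 triggers so far)'' to identify $k-1$ with the number of non-samples among the top $r$ \emph{predecessors} of $f_j$, and compute $\E[k-1\mid\text{Step 6},u]\le(r-1)(1-p/u)$ directly from the fact that these predecessors' arrival times are i.i.d.\ $\mathrm{Unif}[0,u]$ given $U_{f_j}=u$. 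This is a slightly sharper, $u$-dependent bound (your exact expression actually beats the paper's intermediate bound $(rp+1-p)/r$), though you then discard the extra factor by weakening $p/u$ to $p$, landing on the same $-p^2\ln p$. A few remarks: the ``ground-set symmetry'' clause is unnecessary --- given $\rho$ and $m$, the candidate tuple is already a uniformly random ordered tuple and $f_j$ is by construction its $k$-th entry, so Lemma~\ref{lem:randomgreedy} applies verbatim; the invariant deserves one line of justification (when a candidate pushes out an earlier non-sample, that non-sample leaves $T$ but is replaced by another, so $N=S$ persists, and moreover $k-1\le r-1$ because Step~6 requires a sample/dummy still in $T$); and the case of fewer than $r$ predecessors should be noted explicitly, though it only makes your bound slacker.
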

\begin{proof}

Consider the following offline simulation algorithm. In the first part of the simulation, each weight $w_i$ in the adversarial list $W$ selects an arrival time $t_i$ in $(0,1)$ uniformly and independently. The algorithm keeps a set $T$ containing the top $r$ \emph{weights} seen at every moment (initially containing $r$ dummy weights of negative value) and processes the weights as they arrive, sampling the ones arriving before time $p$. When a weight arriving after time $p$ is processed, the algorithm \emph{marks it as a candidate} if, first, that weight enters $T$, and second, the one leaving $T$ is either in the sample or a dummy weight.
In the second part of the simulation, the algorithm assigns to each candidate weight a different element of the matroid uniformly at random. Then, it runs the greedy procedure on the sequence of candidates in the order they appeared and returns its answer. Using that the cardinality of the sampled set has binomial distribution with parameters $n$ and $p$, it is not hard to check that the set of elements and weights returned by this simulation has the same distribution as the one returned by Algorithm \ref{algorithm}. For this reason, we focus on the simulation.

We estimate the probability that each one of the top $r$ weights appears in the output set. Focus on one such weight $w_i$, $i\leq r$, and let $\ell$ be a nonnegative integer strictly smaller than $r$. Define $\mathcal{E}_\ell$ as the event that exactly $\ell$ of the top $r$ weights excluding $w_i$ are sampled.  Given that event $\mathcal{E}_\ell$ occurs, the corresponding $\ell$ high weights in the sample enter $T$ as soon as they arrive and never leave it. Since every candidate pushes out either a dummy or a sampled weight of $T$ at the moment it is marked, the previous implies that the number of candidates marked by the simulation algorithm is at most (in fact, exactly) $r-\ell$. Event $\mathcal{E}_\ell$ occurs with probability $\binom{r-1}{\ell}p^\ell(1-p)^{r-1-\ell}$.

\begin{Claim} For $\ell \leq r-1, i \leq r$ and $t_i \geq p$,
$\Pr(w_i \text{ is marked as a candidate } | \mathcal{E}_\ell, t_i)\geq p/t_i$.
\end{Claim}

\begin{proof}
  Since $w_i$ is one of the top $r$ weights, it enters set $T$ at the time it is considered. Thus, it will be marked as a candidate if and only if the weight leaving $T$ at that time is either a dummy or a sampled weight.

Let $A(t_i) = \{w_j: t_j < t_i\}$ be the set of weights seen before $w_i$ arrives. If this set has less than $r$ elements then the element leaving $T$ at $t_i$ will be a dummy weight. Consider the case where $A(t_i)$ has cardinality at least $r$ and let $w_j$ be the top $r$-th element of this set. Since $w_j$ is not one of the top $r$ elements in the full adversarial list, its arrival time $t_j$ is independent of $\mathcal{E}_\ell$. Therefore,
\begin{align*}
  \Pr(w_i &\text{ is marked as a candidate } |\ \mathcal{E}_\ell, t_i) \\
   &=   1 \cdot \Pr(|A(t_i)|<r\ |\ \mathcal{E}_\ell, t_i) + \Pr(t_j < p\ |\ t_j < t_i) \Pr(|A(t_i)|\geq r\ |\ \mathcal{E}_\ell, t_i)\\
   &=   1 \cdot \Pr(|A(t_i)|<r\ |\ \mathcal{E}_\ell, t_i) + \frac{p}{t_i} \Pr(|A(t_i)|\geq r\ |\ \mathcal{E}_\ell, t_i) \geq \frac{p}{t_i}.\qedhere
\end{align*}
\end{proof}

The elements of the matroid assigned to the candidate weights form a random set. Conditioned on $\mathcal{E}_\ell$ and on $w_i$ being a candidate, Lemma \ref{lem:randomgreedy} implies that no matter what position $w_i$ takes in the list of at most $r-\ell$ candidates, the probability that it gets added to the output is at least $1-(r-\ell)/r = (\ell + 1)/r$; therefore, the probability that $w_i$ appears in the output is at least
\begin{align*}
\sum_{\ell=0}^{r-1} \frac{(\ell + 1)}{r} \binom{r-1}{\ell}p^\ell(1-p)^{r-1-\ell} \int_{p}^1\frac{p}{t_i}dt_i &= \frac{(rp + (1-p))(-p\ln p)}{r} \geq -p^2 \ln p.
\end{align*}
Theorem \ref{thm:algorithm1} follows easily from here.\qedhere
\end{proof}

We remark that Algorithm \ref{algorithm} does not need to learn the weights of the elements: the algorithm can proceed by only learning the relative order of the weights seen so far. Also, we note that this algorithm is not constant competitive in the zero information model. In fact, if we had such an algorithm $\mathcal{A}$ for uniformly dense matroids under zero information we could obtain one for general matroids by using that every matroid $\M$ is a restriction of a uniformly dense matroid $\M'$~\cite{lai1995every}. The algorithm for $\M$ would virtually complete the matroid $\M'$ by adding a dummy set of zero weight elements and run algorithm $\mathcal{A}$ on $\M'$, simulating the augmented input in such a way that the dummy elements arrive uniformly at random similarly to the real ones.

\section{Principal sequence and general matroids.}\label{section4}

In this section we  need the concept of \emph{principal sequence} of a matroid. This notion is related to the theory of \emph{principal partition} of graphs, matroids and submodular systems, which has applications to connectivity and reliability of networks and to resource allocation problems. The theory of principal partition is extensively analyzed in a monograph by Narayanan~\cite{narayanan1997submodular} and in a recent survey of Fujishige~\cite{fujishige2009theory}.
The definition of principal sequence of a matroid we present was introduced by several authors under different names (See, e.g., ~\cite{Narayanan74phd,tomizawa1976strongly,narayanan1981elementary}).

\begin{thm}[Principal Sequence]\label{thm:ppalsequence}
Let $\M=(E,\I)$ be a loopless matroid with rank function $r$. There is a sequence of sets $\emptyset = F_0 \subsetneq F_1 \subsetneq \dots \subsetneq F_k = E$ and a sequence of values $\infty  > \lambda_1 > \lambda_2 > \dots > \lambda_k \geq 1$ satisfying:
\begin{compactenum}
  \item The values $\lambda_1,\ldots,\lambda_k$ are the only ones for which the submodular set function $f_{\lambda_i}:=\lambda_i r(X) - |X|$ admits more than one minimizer.
  \item For every $1\leq i \leq k$, the unique minimal and unique maximal minimizers of the function $f_{\lambda_i}$ are $F_{i-1}$ and $F_i$ respectively.
\end{compactenum}
The sequence $(F_i)_{i=0}^k$ is called the \emph{principal sequence} of $\M$ and $(\lambda_i)_{i=1}^k$ is the associated sequence of critical values.
\end{thm}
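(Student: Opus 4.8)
The plan is to study the one-parameter family of submodular functions $f_\lambda(X)=\lambda r(X)-|X|$ (for $\lambda>0$) through its value function $\phi(\lambda):=\min_{X\subseteq E}f_\lambda(X)$, and to recover the principal sequence and the critical values from the breakpoints of $\phi$ together with the lattice structure of the minimizer families. I will take $\lambda_1>\dots>\lambda_k$ to be exactly the points at which $\phi$ fails to be differentiable.

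First I would record the background facts. For each fixed $\lambda>0$, $f_\lambda$ is submodular (a positive multiple of the submodular function $r$ minus the modular function $|\cdot|$), so by the usual uncrossing argument the minimizers of $f_\lambda$ are closed under union and intersection; hence $f_\lambda$ has a unique minimal minimizer and a unique maximal minimizer, and it has more than one minimizer precisely when these two differ. Also $\phi$ is a pointwise minimum of the finitely many affine maps $\lambda\mapsto\lambda r(X)-|X|$, so it is concave and piecewise linear with finitely many breakpoints; on an affine piece through $\lambda$ the slope equals $r(X)$ for any minimizing $X$, and concavity makes this slope nonincreasing in $\lambda$. Using $r(\emptyset)=0$, looplessness (so $X\neq\emptyset\Rightarrow r(X)\ge 1$), and $r(E)-r(X)\le|E\setminus X|$, one checks that for $\lambda>n$ the unique minimizer is $\emptyset$ with $\phi(\lambda)=0$, while for $\lambda<1$ the unique minimizer is $E$; in particular $\phi$ is nonconstant, so $k\ge 1$, the smallest breakpoint satisfies $\lambda_k\ge 1$, and the largest is finite.

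The technical heart is a monotone uncrossing lemma: if $X$ minimizes $f_\lambda$, $Y$ minimizes $f_{\lambda'}$ and $\lambda>\lambda'>0$, then $X\cap Y$ minimizes $f_\lambda$ and $X\cup Y$ minimizes $f_{\lambda'}$. This follows by expanding $f_\lambda(X)+f_{\lambda'}(Y)-f_\lambda(X\cap Y)-f_{\lambda'}(X\cup Y)$: the cardinality terms cancel by modularity, and what remains, $\lambda\bigl(r(X)-r(X\cap Y)\bigr)-\lambda'\bigl(r(X\cup Y)-r(Y)\bigr)$, is nonnegative because submodularity of $r$ gives $r(X)-r(X\cap Y)\ge r(X\cup Y)-r(Y)\ge 0$ and $\lambda\ge\lambda'>0$; minimality of $X$ and $Y$ then forces the slack on both sides to vanish. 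I draw two consequences. First, off the breakpoints $\phi$ is differentiable, hence has a unique supporting line, so all minimizers of $f_\lambda$ there have the same rank and cardinality, and since their union is again a minimizer they must coincide; the same supporting-line reasoning shows this unique minimizer is constant on each open interval between consecutive breakpoints. Second, at a breakpoint $\lambda_i$ let $A_i$ and $B_i$ be the constant minimizers on the intervals just above and just below $\lambda_i$ respectively (each is a minimizer of $f_{\lambda_i}$ by letting $\lambda\to\lambda_i$ and using continuity of $\phi$); applying the uncrossing lemma to $A_i$ and any minimizer $Z$ of $f_{\lambda_i}$ yields $A_i\cap Z=A_i$, i.e.\ $A_i\subseteq Z$, and applying it to $Z$ and $B_i$ yields $Z\cup B_i=B_i$, i.e.\ $Z\subseteq B_i$; thus $A_i$ and $B_i$ are the minimal and maximal minimizers of $f_{\lambda_i}$.

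Assembling these proves the theorem. Since the left slope of $\phi$ at $\lambda_i$ strictly exceeds the right slope, $r(B_i)>r(A_i)$, so $A_i\subsetneq B_i$ and $f_{\lambda_i}$ has more than one minimizer; conversely the first consequence says non-breakpoints have a unique minimizer, so the $\lambda_i$ are exactly the values with multiple minimizers, and $\lambda_k\ge 1$ as noted --- this is claim~1. Because the open interval between $\lambda_i$ and $\lambda_{i-1}$ carries a single constant minimizer, ``the minimizer just below $\lambda_{i-1}$'' equals ``the minimizer just above $\lambda_i$''; setting $F_0:=\emptyset$ (the minimizer for large $\lambda$) and $F_i:=B_i$ for $1\le i\le k$ gives $F_{i-1}=A_i$, and then the second consequence says precisely that $F_{i-1}$ and $F_i$ are the minimal and maximal minimizers of $f_{\lambda_i}$ --- this is claim~2. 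Finally $F_0=\emptyset\subsetneq F_1\subsetneq\dots\subsetneq F_k$, the inclusions strict by the rank jumps at the $\lambda_i$, and $F_k=E$ because $E$ is the unique minimizer for all $\lambda<1\le\lambda_k$. I expect the uncrossing lemma, and the ``unique and constant minimizer off the breakpoints'' step, to be the only genuinely delicate points; everything else is routine bookkeeping with concave piecewise-linear functions.
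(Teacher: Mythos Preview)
The paper does not actually prove Theorem~\ref{thm:ppalsequence}; it is stated as a known result and attributed to the literature on principal partitions (Narayanan, Tomizawa, Fujishige, and others). So there is no in-paper proof to compare your attempt against --- you are supplying something the paper omits.

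That said, your argument is correct and is essentially the standard parametric-submodular route to this result. The key steps are all handled properly: the monotone uncrossing lemma is computed correctly (modularity of cardinality kills the $|\cdot|$ terms, submodularity and monotonicity of $r$ give the sign, and minimality forces equality on both pieces); uniqueness and constancy of the minimizer off breakpoints follows cleanly from the fact that on a single affine piece of $\phi$ all minimizers share both rank and cardinality, together with the lattice property; the identification of the adjacent-interval minimizers $A_i,B_i$ with the minimal and maximal minimizers at $\lambda_i$ via uncrossing against a $\lambda$ strictly above or below is sound; and the boundary checks ($\emptyset$ unique for $\lambda>n$ by looplessness, $E$ unique for $\lambda<1$ by $r(E)-r(X)\le|E\setminus X|$, hence $\lambda_k\ge1$ and $F_k=E$) are fine. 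One small cosmetic point: when you invoke the uncrossing lemma to show $A_i\subseteq Z$ you are using that $A_i$ is the \emph{unique} minimizer at some $\lambda>\lambda_i$, which you have already established --- just make sure that dependence is explicit in a written-out version.
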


From this definition, it is not hard to obtain the following lemma (it also follows from \cite[Theorem 3.11]{fujishige2009theory} or \cite{catlin1992fractional}).

\begin{lem}\label{lem:Density} Let $\M=(E,\I)$ be a loopless matroid with principal sequence $(F_i)_{i=0}^k$ and critical values $(\lambda_i)_{i=1}^k$. Then, for every $1\leq i\leq k$, the matroid $\M_i = \left(\M / F_{i-1}\right) \big|_{(F_i \setminus F_{i-1})}$ obtained by contracting $F_{i-1}$ and restricting to $F_i \setminus F_{i-1}$ is uniformly dense, with density $\frac{|F_i\setminus F_{i-1}|}{r_{\M_i}(F_i \setminus F_{i-1})} = \lambda_i$. These matroids are known as the \emph{principal minors} of~$\M$.
\end{lem}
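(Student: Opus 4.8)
The plan is to reduce the whole statement to one elementary identity relating the rank function of the minor $\M_i$ to the submodular function $f_{\lambda_i}$, and then to read off both assertions (uniform density, and the exact value $\lambda_i$ of the density) directly from the fact, supplied by Theorem~\ref{thm:ppalsequence}, that $F_{i-1}$ and $F_i$ are minimizers of $f_{\lambda_i}$.

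First I would fix $i$ and record the rank function of $\M_i = (\M/F_{i-1})\big|_{F_i\setminus F_{i-1}}$: for every $Y\subseteq F_i\setminus F_{i-1}$ the standard contraction formula gives $r_{\M_i}(Y)=r(Y\cup F_{i-1})-r(F_{i-1})$. Since $Y$ and $F_{i-1}$ are disjoint, $|Y\cup F_{i-1}|=|Y|+|F_{i-1}|$, and a one-line computation yields the identity
\[
\lambda_i\, r_{\M_i}(Y)-|Y| \;=\; f_{\lambda_i}(Y\cup F_{i-1})-f_{\lambda_i}(F_{i-1}),
\]
with $f_{\lambda_i}(X)=\lambda_i r(X)-|X|$ as in Theorem~\ref{thm:ppalsequence}. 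So the behaviour of $\M_i$ is entirely controlled by how $f_{\lambda_i}$ compares on sets of the form $Y\cup F_{i-1}$ with its value at $F_{i-1}$.

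Next I would invoke part~2 of Theorem~\ref{thm:ppalsequence}: $F_{i-1}$ is a minimizer of $f_{\lambda_i}$ over all subsets of $E$, so $f_{\lambda_i}(Y\cup F_{i-1})\ge f_{\lambda_i}(F_{i-1})$ for every $Y\subseteq F_i\setminus F_{i-1}$, which via the identity reads $\lambda_i\, r_{\M_i}(Y)\ge |Y|$. Taking $Y$ a singleton and using $\lambda_i<\infty$ forces $r_{\M_i}$ to be positive on singletons, i.e.\ $\M_i$ is loopless, so its density $\gamma(\M_i)$ is well defined; and for every non-empty $Y$ the same inequality gives $|Y|/r_{\M_i}(Y)\le\lambda_i$. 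Finally, since $F_i$ is also a minimizer of $f_{\lambda_i}$ we have $f_{\lambda_i}(F_i)=f_{\lambda_i}(F_{i-1})$, and plugging $Y=F_i\setminus F_{i-1}$ into the identity gives $\lambda_i\, r_{\M_i}(F_i\setminus F_{i-1})=|F_i\setminus F_{i-1}|$. Hence $\gamma(\M_i)=\lambda_i$ and is attained by the whole ground set $F_i\setminus F_{i-1}$, which is precisely the claim that $\M_i$ is uniformly dense with density $\lambda_i$.

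I do not expect a genuine obstacle: the argument uses only that $F_{i-1}$ and $F_i$ minimize $f_{\lambda_i}$, together with the contraction rank formula, and the sole point needing a moment's care is to verify that $\M_i$ has no loops before speaking of its density --- which, as noted, drops out of the very same inequality. (Alternatively, the conclusion can be quoted from \cite[Theorem 3.11]{fujishige2009theory} or \cite{catlin1992fractional}, where it is phrased through fractional strength/arboricity.)
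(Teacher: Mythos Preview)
Your argument is correct. The paper itself does not actually prove Lemma~\ref{lem:Density}: it only remarks that the lemma ``is not hard to obtain'' from the definition of the principal sequence and alternatively points to \cite[Theorem~3.11]{fujishige2009theory} and \cite{catlin1992fractional}. Your proposal is precisely the natural way to fill in those omitted details --- reduce everything to the identity $\lambda_i\,r_{\M_i}(Y)-|Y|=f_{\lambda_i}(Y\cup F_{i-1})-f_{\lambda_i}(F_{i-1})$ and then use that both $F_{i-1}$ and $F_i$ minimize $f_{\lambda_i}$ --- so it is fully in line with what the paper intends, just made explicit. The only extra care you take beyond what the paper hints at is checking that $\M_i$ is loopless before invoking the density; this is a valid point and your derivation of it from the same inequality is clean.
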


The principal sequence of $\M$ can be constructed by iteratively finding the maximal densest set $F_i$ of the current matroid, adding it to the sequence and contracting it on the matroid, until all the elements have been contracted. Polynomial time algorithms to compute the principal sequence of a given matroid can be found in the literature (see, e.g. \cite{narayanan1981elementary} or \cite[chapters 10,11]{narayanan1997submodular}).

We use Lemma \ref{lem:Density} to design an algorithm for the matroid secretary problem under random assignment in a general (not necessarily loopless) matroid $\M=(E,\I)$. Let $(\M_i)_{i=1}^k$ be the sequence of principal minors of the loopless matroid obtained by deleting the set $E_0$ of loops from $\M$. For every $i$, let $E_i=F_i\setminus F_{i-1}$ and $r_i$ denote the ground set and rank of $\M_i$ respectively. Note that the family $\{E_0,\ldots,E_k\}$ is a partition of the ground set $E$. Define matroids $\M'$ and $\PP$ with ground set $E$ as follows:
\begin{align*}
\I(\M')= \left\{\bigcup_{i=1}^k I_i:\, I_i \in \I(\M_i)\right\}, \quad \I(\PP)= \left\{\bigcup_{i=1}^k I_i:\, I_i \subseteq E_i, |I_i| \leq r_i\right\}.
\end{align*}

Since any independent set in $\M'$ is, by definition of each $\M_i$, also independent in $\M$, Algorithm~\ref{algorithm2}, described below returns an independent set of $\M$.

\begin{algorithm}[h!!!!!]
\caption{for General Matroids of $n$ elements and rank $r$ under random assignment.}
\label{algorithm2}
\begin{algorithmic}[1]
%\Require{A stream of $n$ elements in a matroid $\M=(E,\I)$ with weights selected according to the random assignment model.}
%\Ensure{An independent set $\ALG \in \I(\M)$.}
\State{Compute the sequence of principal minors $(\M_i)_{i=1}^k$ of the matroid obtained by removing the loops of $\M$.}
\State{Run Algorithm \ref{algorithm} in parallel on each $\M_i$ and return the union of the answers.}
\end{algorithmic}
\end{algorithm}

\begin{thm}\label{thm:algorithm2}
Algorithm \ref{algorithm2} is $2e/(1-1/e)=2e^2/(e-1)$-competitive for general matroids in the random assignment model.
\end{thm}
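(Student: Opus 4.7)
The plan is to split the target ratio $\tfrac{2e^2}{e-1}=(2e)\cdot\tfrac{e}{e-1}$ into two factors: a $1/(2e)$ piece coming from Theorem~\ref{thm:algorithm1} applied to each principal minor separately, and a $(1-1/e)$ piece lost in passing from $\M$ to the partition matroid $\PP=\bigoplus_{i=1}^k U_{r_i,|E_i|}$ defined in Section~\ref{section4} (whose bases select exactly $r_i$ elements in each $E_i$).

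For the first piece, each $\M_i$ is uniformly dense of rank $r_i$ by Lemma~\ref{lem:Density}. Conditional on the random subset $W_i$ of weights assigned to $E_i$, both the assignment of $W_i$ inside $E_i$ and the arrival order of $E_i$ in $\pi$ are uniform, so Theorem~\ref{thm:algorithm1} applies to give $\E_{\sigma,\pi}[w(\ALG_i)\mid W_i]\ge\tfrac{1}{2e}\sum_{j=1}^{r_i} w^{(i)}_j=\tfrac{1}{2e}\,w(\OPT_{\PP_i}(\sigma))$, where $w^{(i)}_j$ denotes the $j$-th largest weight in $W_i$ and $\PP_i=U_{r_i,|E_i|}$. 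Summing over $i$ and taking expectation over $\sigma$ yields
\[
\E[w(\ALG)] \;=\; \sum_{i=1}^k \E[w(\ALG_i)] \;\ge\; \tfrac{1}{2e}\,\E[w(\OPT_\PP(\sigma))].
\]

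The second and main step is to establish the structural inequality
\[
\E[w(\OPT_\PP(\sigma))]\;\ge\;(1-1/e)\,\E[w(\OPT_\M(\sigma))],\qquad(\star)
\]
which combined with the previous bound yields the desired $\E[w(\ALG)]\ge\tfrac{e-1}{2e^2}\E[w(\OPT_\M)]$ and hence the claimed competitive ratio. This is the main obstacle: one cannot obtain $(\star)$ from any pointwise comparison, because $\I(\M)\not\subseteq\I(\PP)$ in general. Indeed, an independent set of $\M$ may place more than $r_i$ elements in a dense part $E_i$ by ``borrowing'' rank slack from sparser earlier parts of the principal chain (the constraint from $\M$ on a base $I$ is the cumulative $|I\cap F_j|\le r(F_j)=\sum_{i\le j}r_i$, which is strictly weaker than the local partition constraint $|I\cap E_i|\le r_i$), so $\OPT_\M$ can strictly exceed $\OPT_\PP$ for specific realizations of $\sigma$. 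My approach to $(\star)$ is to decompose by weight level sets: writing $w(\OPT)=\sum_j w_j\cdot\mathbf{1}[w_j\in\OPT]$ and exploiting that the $w_j$ are decreasing, Abel summation reduces $(\star)$ to showing, for every $k\in[n]$,
\[
\E[r_\PP(T_k)]\;\ge\;(1-1/e)\,\E[r_\M(T_k)],
\]
where $T_k$ is the set of elements receiving the top $k$ weights, i.e.\ a uniform random size-$k$ subset of $E$. The left side factors cleanly as $\sum_i\E[\min(|T_k\cap E_i|,r_i)]$, while $\E[r_\M(T_k)]$ is controlled via the principal chain $F_0\subsetneq\dots\subsetneq F_k$ together with the uniform-density estimate $r_{\M_i}(X)\ge |X|/\lambda_i$ from Lemma~\ref{lem:Density}. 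The constant $1-1/e$ arises from the worst-case ratio between a truncated hypergeometric count and its expectation, via the standard bound $1-(1-1/\lambda)^\lambda\ge 1-1/e$ for $\lambda\ge1$, applied minor by minor with $\lambda=\lambda_i$.
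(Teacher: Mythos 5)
Your proposal matches the paper's own proof essentially step for step: the same factorization into a $1/(2e)$ loss from applying Theorem~\ref{thm:algorithm1} to each principal minor plus a $(1-1/e)$ loss in comparing $\PP$ to $\M$, and the same Abel-summation reduction to the expected-rank inequality $\E[r_\PP(T_k)]\ge(1-1/e)\E[r_\M(T_k)]$, which is precisely the paper's Lemmas~\ref{lem:ineq2} and~\ref{lem:expectedrank}. Two small imprecisions in your sketch of the final rank comparison: the paper bounds $\E[r_\M(T_k)]$ from \emph{above} using subadditivity, namely $r_\M(T_k)\le r_\M(F_{i^*})+|T_k\setminus F_{i^*}|$, not the density lower bound $r_{\M_i}(X)\ge|X|/\lambda_i$ (which points the wrong way for what you need), and the $(1-1/e)$ constant is extracted by splitting parts into dense ($\lambda_i\ge n/j$, giving $1-e^{-\lambda_i j/n}\ge 1-e^{-1}$) and sparse (using that $(1-e^{-x})/x$ is decreasing) rather than via a single per-minor $1-(1-1/\lambda_i)^{\lambda_i}$ bound, though the underlying idea is the same.
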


To prove Theorem \ref{thm:algorithm2}, we compare the weight of the set $\ALG$ returned by Algorithm \ref{algorithm2} with the optimum of the partition matroid $\PP$ defined above. Since both $\M'$ and $\PP$ are disjoint union of uniformly dense and uniform matroids over the same ground set and having the same rank, we expect them to behave similarly.
Observe that the random permutation $\sigma:[n]\to [n]$ that is used to assign the weights of the adversary to the elements of the matroid can be viewed as the composition of a random partition of $[n]$ into blocks of size $(|E_i|)_{i=0}^k$, and a collection of random permutations inside each block. Conditioned on the random partition, each block $E_i$ receives a hidden list of weights which are assigned uniformly at random to the elements of the block. Since each $\M_i$ is uniformly dense and the elements of $E_i$ arrive in random order, Theorem \ref{thm:algorithm1} implies that Algorithm~\ref{algorithm} recovers, in expectation, at least $1/(2e)$-fraction of the \emph{combined weight of the heaviest $r_i$ elements of $E_i$}, where the expectation is over the random permutation of that particular block. Noting that the union of the heaviest $r_i$ elements of each $E_i$ is exactly the optimum of the partition matroid $\PP$ defined above, we conclude, by removing the conditioning, that
\begin{align}
  \E_{\sigma}[w(\OPT_\M(\sigma))] \geq   \E_{\sigma}[w(\OPT_{\M'}(\sigma))] \geq \E_{\sigma,\pi}[w(\ALG)] \geq \E_{\sigma}[w(\OPT_{\PP}(\sigma))]/(2e). \label{ineq}
\end{align}

To prove that Algorithm~\ref{algorithm2} is constant competitive we only need to show that the optimum of $\PP$ is only a constant away from the optimum in $\M$.

\begin{lem}\label{lem:ineq2} $\E_\sigma[w(\OPT_\PP(\sigma))] \geq (1-1/e)\E_\sigma[w(\OPT_\M(\sigma))]. $
\end{lem}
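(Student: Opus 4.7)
The plan is to reduce the inequality to a pointwise comparison of expected matroid ranks on uniformly random prefixes via Abel summation, and then compare them block by block using the principal partition.

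\textbf{Step 1 (Abel summation).} For any matroid on $E$ with rank function $r$, greedy on the $\sigma$-order gives
\[
w(\OPT(\sigma))=\sum_{i=1}^n w_i\bigl(r(V_i)-r(V_{i-1})\bigr)=\sum_{j=1}^n(w_j-w_{j+1})\,r(V_j),
\]
where $V_j=\{e_{\sigma(1)},\dots,e_{\sigma(j)}\}$ and $w_{n+1}:=0$. Since $\sigma$ is uniform, $V_j$ is a uniformly random $j$-subset of $E$, so
\[
\E_\sigma[w(\OPT(\sigma))]=\sum_{j=1}^n(w_j-w_{j+1})\,\E[r(V_j)].
\]
Applying this to both $\M$ and $\PP$ and using $w_j-w_{j+1}\ge 0$, the lemma reduces to the pointwise claim
\[
\E[r_\PP(V_j)]\ \ge\ (1-1/e)\,\E[r_\M(V_j)] \qquad \text{for every } j.
\]

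\textbf{Step 2 (Block-wise comparison).} Let $H_i:=|V_j\cap E_i|$, which is hypergeometric with mean $j n_i/n$. Then $r_\PP(V_j)=\sum_i\min(H_i,r_i)$, and the principal chain gives the decomposition $r_\M(V_j)=\sum_i r_{\M/(V_j\cap F_{i-1})}(V_j\cap E_i)$. Combining this with the submodular upper bound $r_\M(V_j)\leq \sum_i r_\M(V_j\cap E_i)$ and Lemma~\ref{lem:randomgreedy} applied inside each uniformly dense minor $\M_i$ yields a per-block bound relating $\E[\min(H_i,r_i)]$ to the contribution of $E_i$ to $\E[r_\M(V_j)]$. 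The factor $(1-1/e)$ emerges from a Poissonization-type tail bound: for uniformly dense $\M_i$, the expected rank of a uniformly random $k$-subset of $E_i$ behaves like $\min(k,r_i)$, and $\E[\min(H_i,r_i)]$ stays within a factor $(1-1/e)$ of $\min(\E[H_i],r_i)$. Summing over $i$ and using $\sum_i r_i=r$ and $\sum_i n_i=n$ then gives the desired inequality.

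\textbf{Main obstacle.} The technical core is the per-block bound in Step~2. Pointwise, $r_\M(V)$ may strictly exceed $r_\PP(V)$: for instance, in a graphic matroid whose principal sequence places two non-parallel edges sharing a single external vertex into a block of capacity $1$, those two edges have matroid rank $2$ but partition rank $1$. So the comparison cannot be done by a pointwise inequality of rank functions, and one has to argue only in expectation, simultaneously exploiting the uniform density of each principal minor via Lemma~\ref{lem:randomgreedy} and controlling the hypergeometric fluctuations of the block sizes $H_i$. This is where the $(1-1/e)$ loss enters; the remainder of the proof is a routine aggregation of the per-block bounds.
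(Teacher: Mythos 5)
Your Step~1 (Abel summation / reduction to a pointwise comparison of expected ranks on random prefixes) matches the paper exactly, and you have also correctly isolated the right quantitative target: that for the hypergeometric block counts $H_i=|V_j\cap E_i|$, one has $\E[\min(H_i,r_i)]\ge(1-1/e)\min(\E[H_i],r_i)$. This is precisely the Claim proved inside the paper's Lemma~\ref{lem:expectedrank}, although you do not supply a proof of it; the paper establishes it by partitioning each $E_i$ into $r_i$ nearly equal pieces, lower-bounding $\min(H_i,r_i)$ by the number of pieces that $V_j$ hits, and estimating the resulting probabilities by $1-e^{-\lambda_i j/n}$.

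The genuine gap is in how you upper-bound $\E[r_\M(V_j)]$. You propose the block-by-block submodular bound $r_\M(V_j)\le\sum_i r_\M(V_j\cap E_i)$ (or a chain decomposition) together with an appeal to Lemma~\ref{lem:randomgreedy}. Neither works here. Lemma~\ref{lem:randomgreedy} concerns the probability that an element survives the greedy pass and plays no role in this lemma. More importantly, the per-block bound you need, roughly $\E[r_\M(V_j\cap E_i)]\le\min(\E[H_i],r_i)$, is simply false: $r_\M(E_i)$ can be strictly larger than $r_i=r_\M(F_i)-r_\M(F_{i-1})$, so on a dense block the block-wise submodular bound cannot be capped at $r_i$. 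The chain decomposition $r_\M(V_j)=\sum_i r_{\M/(V_j\cap F_{i-1})}(V_j\cap E_i)$ also fails for your purposes, because contracting the smaller set $V_j\cap F_{i-1}$ (rather than all of $F_{i-1}$) increases, not decreases, each summand, so there is no pointwise comparison with $\min(H_i,r_i)$ — which you yourself observe in the ``Main obstacle'' paragraph.

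The missing idea is to split $E$ once, at the threshold index $i^*$ where the densities $\lambda_i$ cross $n/j$. Since the densities are strictly decreasing, dense blocks are exactly $E_1,\dots,E_{i^*}$ and $F_{i^*}=\bigcup_{i\le i^*}E_i$ lies in the principal sequence. One then uses submodularity only at this single cut:
\begin{align*}
\E[r_\M(V_j)] \le \E\bigl[r_\M(V_j\cap F_{i^*})\bigr] + \E\bigl[r_\M(V_j\setminus F_{i^*})\bigr] \le r_\M(F_{i^*}) + \E\bigl[|V_j\setminus F_{i^*}|\bigr] = \sum_{i\le i^*}r_i + \sum_{i>i^*}\tfrac{|E_i|j}{n},
\end{align*}
which equals $\sum_i\min(\E[H_i],r_i)$ by the choice of $i^*$. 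Combined with your hypergeometric claim this gives the lemma. So the structure of your plan is right, but the aggregation is not ``routine'': it requires this specific single-threshold application of submodularity along the principal sequence, and your sketch does not supply it. The facts $\sum_i r_i=r$ and $\sum_i|E_i|=n$ that you invoke at the end are not what make the aggregation work.
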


To prove this lemma we note the following fact. For all $j$, let $A^\sigma_j=\{e_{\sigma(1)},e_{\sigma(2)},\ldots,e_{\sigma(j)}\}$ denote the (random) set of the elements that receives the top $j$ weights. Then, for any matroid:
\begin{align}
  \E_\sigma[w(\OPT(\sigma))] &= \sum_{j=1}^n w_j \Pr_\sigma[e_{\sigma(j)} \not\in \spa(A^\sigma_{j-1})] =\sum_{j=1}^n w_j (\E_\sigma[r(A^\sigma_j)]-\E_\sigma[r(A^\sigma_{j-1})])\notag\\
  &= \E_\sigma[r(A^\sigma_n)] w_n + \sum_{j=1}^{n-1} \E_\sigma[r(A^\sigma_j)] (w_j - w_{j+1}).\label{useful}
\end{align}

In order to prove Lemma \ref{lem:ineq2}, we only need to show the following.

\begin{lem}\label{lem:expectedrank}
  For every $1\leq j\leq n$, $\E_\sigma[r_\PP(A^\sigma_j)] \geq (1-1/e)\E_\sigma[r_\M(A^\sigma_j)].$
\end{lem}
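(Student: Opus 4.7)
The plan is to sandwich both expectations using the common quantity $\Phi(j) := \sum_{i=1}^k \min(\mu_i, r_i)$, where $\mu_i := j|E_i|/n$ is the expected size of $A_j^\sigma \cap E_i$. The lemma will then follow from the chain $\E_\sigma[r_\PP(A_j^\sigma)] \ge (1-1/e)\,\Phi(j) \ge (1-1/e)\,\E_\sigma[r_\M(A_j^\sigma)]$.

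For the upper bound $\E_\sigma[r_\M(A_j^\sigma)] \le \Phi(j)$, I use subadditivity of matroid rank: pointwise for every $i \in \{0, 1, \ldots, k\}$, $r_\M(A) \le r_\M(A \cap F_i) + r_\M(A \setminus F_i) \le r_\M(F_i) + |A \cap (E_{i+1} \cup \cdots \cup E_k)|$, where the last inequality uses that loops contribute nothing to rank. Combining $r_\M(F_i) = \sum_{l \le i} r_l$ from Lemma~\ref{lem:Density} with $\E[|A \cap (E_{i+1} \cup \cdots \cup E_k)|] = \sum_{l > i} \mu_l$ yields $\E[r_\M(A)] \le h(i) := \sum_{l \le i} r_l + \sum_{l > i} \mu_l$ for each $i$. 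Because the principal densities satisfy $\lambda_1 > \cdots > \lambda_k$, the ratios $\mu_l/r_l = j\lambda_l/n$ strictly decrease in $l$, so the increments $h(i) - h(i-1) = r_i - \mu_i$ change sign at a unique crossover index $i^*$, and $\min_i h(i) = \sum_l \min(r_l, \mu_l) = \Phi(j)$.

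For the lower bound $\E_\sigma[r_\PP(A_j^\sigma)] \ge (1-1/e)\Phi(j)$, since $\PP$ is a direct sum of uniform matroids I have $\E[r_\PP(A)] = \sum_i \E[\min(|A \cap E_i|, r_i)]$, and it suffices to show the per-block inequality $\E[\min(|A \cap E_i|, r_i)] \ge (1-1/e)\min(\mu_i, r_i)$. The variable $X := |A \cap E_i|$ is hypergeometric with mean $\mu_i$; by the standard convex-order comparison (hypergeometric is less variable than binomial with the same mean, which is in turn less variable than the Poisson with the same mean) and concavity of $x \mapsto \min(x, r_i)$, I obtain $\E[\min(X, r_i)] \ge \E[\min(Z, r_i)]$ where $Z \sim \mathrm{Poisson}(\mu_i)$. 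For Poisson $Z$, splitting as $Z = Z_1 + \cdots + Z_{r_i}$ with i.i.d.\ $Z_s \sim \mathrm{Poisson}(\mu_i/r_i)$ and using the pointwise bound $\min(Z, r_i) \ge \sum_{s=1}^{r_i} \mathbf{1}[Z_s \ge 1]$ gives $\E[\min(Z, r_i)] \ge r_i(1 - e^{-\mu_i/r_i})$. The chord inequality $1 - e^{-x} \ge (1 - 1/e)\min(x, 1)$, immediate from concavity of $1 - e^{-x}$ on $[0, 1]$ and monotonicity for $x > 1$, then yields $r_i(1 - e^{-\mu_i/r_i}) \ge (1-1/e)\min(\mu_i, r_i)$.

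The main obstacle is this per-block concentration inequality $\E[\min(X, r)] \ge (1-1/e)\min(\mu, r)$ for hypergeometric $X$; it is where the tight constant $(1-1/e)$ arises, and the convex-order reduction to Poisson together with Poisson splitting keeps the argument short. Summing the per-block bounds gives $\E[r_\PP(A_j^\sigma)] \ge (1-1/e)\Phi(j) \ge (1-1/e)\E[r_\M(A_j^\sigma)]$, finishing the proof.
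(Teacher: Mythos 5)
Your proof is correct, and its overall skeleton coincides with the paper's: both bound $\E[r_\PP]$ from below and $\E[r_\M]$ from above by the common quantity $\Phi(j)=\sum_i\min(\mu_i,r_i)$ (which the paper writes as $\sum_{i\le i^*}r_i+\sum_{i>i^*}\mu_i$ using the unique crossover index $i^*$ determined by the monotone densities $\lambda_i$), and the upper-bound step via subadditivity of rank at the cut $F_{i^*}$ is essentially identical. Where you genuinely diverge is in the per-block lower bound $\E[\min(X_i,r_i)]\ge(1-1/e)\min(\mu_i,r_i)$ for the hypergeometric $X_i=|A_j^\sigma\cap E_i|$. The paper proves this by a purely combinatorial device: it partitions $E_i$ into $r_i$ near-equal pieces (of sizes $\lfloor\lambda_i\rfloor$ and $\lceil\lambda_i\rceil$), observes that the $\PP$-rank of a set is at least the number of pieces it meets, and explicitly computes inclusion probabilities $1-\binom{n-|B|}{j}/\binom{n}{j}$ to arrive at $r_i(1-e^{-\lambda_i j/n})$, handling the integrality of $\lambda_i$ with a parameter $\epsilon_i$. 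You instead invoke the convex-order chain ``hypergeometric $\preceq_{cx}$ binomial $\preceq_{cx}$ Poisson'' (Hoeffding's without-replacement comparison plus the Bernoulli-vs-Poisson convex bound, both standard), pass to $Z\sim\mathrm{Poisson}(\mu_i)$ by concavity of $\min(\cdot,r_i)$, and then use infinite divisibility of the Poisson to split $Z$ into $r_i$ i.i.d.\ parts and count non-zeros. This reaches exactly the same intermediate bound $r_i(1-e^{-\mu_i/r_i})$ but with no integrality bookkeeping; the trade-off is that you import more probabilistic machinery (stochastic orders) where the paper stays self-contained with a single inclusion-probability calculation. Both then finish with the same elementary chord inequality $1-e^{-x}\ge(1-1/e)\min(x,1)$. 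One small suggestion: the convex-order facts you rely on, while standard, are nontrivial and would benefit from a citation (e.g., Hoeffding 1963 for without-replacement sampling, and a stochastic-orders reference for the binomial--Poisson comparison).
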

\begin{proof}
Since $P$ is a partition matroid with non-trivial parts $E_1,\ldots,E_k$, we have:
$$\E_\sigma[r_\PP(A^\sigma_j)] = \sum_{i=1}^k \E_\sigma[r_P(A^\sigma_j \cap E_i)] = \sum_{i=1}^k \E_\sigma[\min(|A^\sigma_j \cap E_i|,r_i)].$$

For each $i$, $\PP|_{E_i}$ is a uniform matroid with the same density $\lambda_i=|E_i|/r_i$ as the corresponding uniformly dense matroid $\M_i$. By Theorem \ref{thm:ppalsequence}, these densities are strictly decreasing with $i$.
Call a part $E_i$ \emph{dense} if $\lambda_i \geq n/j$ and \emph{sparse} otherwise. Intuitively, when a part $E_i$ is dense we expect $A^\sigma_j \cap E_i$ to contain at least $|E_i| (j/n) \geq r_i$ elements and thus we expect the rank of $A^\sigma_j \cap E_i$ in the partition matroid to be close to $r_i$. On the other hand, for sparse parts this quantity should be closer to $\E[|A^\sigma_j \cap E_i|] = |E_i| (j/n)$. We formalize this intuition in the following claim.

\begin{Claim} If $\lambda_i \geq n/j$, then $\E_\sigma[r_P(A^\sigma_j \cap E_i)] \geq (1-1/e) r_i$. If on the other hand, $\lambda_i \leq n/j$, then $\E_\sigma[r_P(A^\sigma_j \cap E_i)] \geq (1-1/e) |E_i| (j/n)$.
\end{Claim}

\begin{proof}
Focus on a part $E_i$ and split its elements into $r_i$ pieces as evenly as possible. To do this, let $\epsilon_i= \lambda_i - \lfloor \lambda_i \rfloor$ and create $r_i (1 - \epsilon_i)$ pieces of size $\lambda_i - \epsilon_i = \lfloor\lambda_i\rfloor$ and $r_i \epsilon_i$ pieces of size $\lambda_i + 1 -\epsilon_i=\lfloor\lambda_i\rfloor +1$. It is easy to see that both $r_i(1-\epsilon_i)$ and $r_i\epsilon$ are integers and that the previous construction is indeed a partition of $E_i$ into $r_i$ pieces as claimed (note that if $\lambda_i=|E_i|/r_i$ is an integer, this partition consists simply on $r_i$ pieces of size $\lambda_i$).

The rank of any set in $\PP|_{E_i}$ is at least as high as the number of pieces of $E_i$ this set intersects; therefore, $\E[r_{\PP}(A^\sigma_j \cap E_i)]$ is at least
\begin{align*}
\sum_{B: \text{Piece of $E_i$}} \Pr((A_j^\sigma \cap B)\neq 0) &= \sum_{B: \text{Piece of $E_i$}}\left(1-\frac{\binom{n-|B|}{j}}{\binom{n}{j}}\right) =\sum_{B: \text{Piece of $E_i$}}\left(1-\prod_{\ell=0}^{|B|-1}\left(1 - \frac{j}{n-\ell}\right)\right)\\
 &\geq\sum_{B: \text{Piece of $E_i$}} 1-(1 - j/n)^{|B|}\\
 &=r_i (1-\epsilon_i)\left(1-(1-j/n)^{\lambda_i-\epsilon_i}\right) + r_i \epsilon_i\left(1-\left(1-j/n\right)^{\lambda_i-\epsilon_i+1}\right)\\
%  &=r_i\left(1 - (1-j/n)^{\lambda_i-\epsilon_i}\left(1-\epsilon_i +\epsilon_i(1-j/n)\right) \right)\\
  &=r_i\left(1 - (1-j/n)^{\lambda_i-\epsilon_i}\left(1-\epsilon_ij/n\right) \right)\\
   &\geq r_i\left(1 - e^{(-j/n)(\lambda_i - \epsilon_i)}e^{-\epsilon_i j/n}\right) = r_i(1-e^{-\lambda_i j/n}).
\end{align*}

The function $(1-e^{-x})$ is increasing, thus if $\lambda_i \geq n/j$,
$$\E[r_\PP(A^\sigma_j \cap E_i)] \geq r_i (1-e^{-\lambda_i j/n}) \geq r_i (1-e^{-1}).$$

On the other hand, the function $(1-e^{-x})/x$ is decreasing, thus if $\lambda_i \leq n/j$,
$$\E[r_\PP(A^\sigma_j \cap E_i)] \geq r_i (1-e^{-\lambda_i j/n}) = (r_i \lambda_i j/n) \frac{(1-e^{-\lambda_i j/n})}{\lambda_i j/n} \geq (|E_i| j/n) (1-e^{-1}).\qedhere$$  
\end{proof}
Since $(\lambda_i)_{i=1}^k$ is a decreasing sequence, there is an index~$i^*$ such that $E_i$ is dense if and only if $1\leq i\leq i^*$. Recall that $\bigcup_{i=1}^{i^*}E_i$ is equal to the set $F_{i^*}$ in the principal sequence of the matroid $\M\setminus E_0$. Since every set in the principal sequence has the same rank in both $\M$ and $\PP$ we get:
\begin{align*}
\E_\sigma[r_\M(A^\sigma_j)] &\leq \E_\sigma[r_\M(A^\sigma_j \cap F_{i^*}) + r_\M(A^\sigma_j \cap (E\setminus F_{i^*}))] \leq r_\M(F_{i^*}) + \E_\sigma[|A^\sigma_j \cap (E\setminus F_{i^*})| \\
&= \sum_{i=1}^{i^*} r_i + \sum_{i=i^*+1}^{k} |E_i| (j/n) \leq \sum_{i=1}^k\E_\sigma[r_P(A_j^\sigma \cap E_i)] / (1-1/e) = \E_\sigma[r_P(A_j^\sigma)]/(1-1/e).\qedhere
\end{align*}
\end{proof}

By combining Lemma \ref{lem:ineq2} with inequality \eqref{ineq} we conclude the proof of Theorem \ref{thm:algorithm2}.

\section{Algorithms for the zero information model}\label{section5}
We give various algorithms for different classes of matroids in the zero information model. In all of them we assume the matroid is \emph{loopless} (in the random assignment model we can not make this assumption since the introduction of loops changes the distribution of the weights of the elements).

\subsection{Cographic Matroids}
In any $3$-edge-connected graph $G$ we can find three spanning trees $T_1$, $T_2$ and $T_3$, such that the union of their complements covers $E(G)$ (This follows from e.g. Edmonds's Matroid Partitioning Theorem~\cite{edmonds1965minimum}). The sets $B_i=E\setminus T_i$ are bases in the cographic matroid of $G$. Consider the following algorithm for this matroid. Select $i \in \{1,2,3\}$ uniformly at random and accept all elements in $B_i$. Since every edge of $G$ is selected with probability at least $1/3$, this algorithm is $3$-competitive.

We modify this algorithm to work on the cographic matroid $\M$ of any graph $G$. First, remove all the bridges of $G$ since they are loops in $\M$. Decompose the edge set of the remaining graph as the direct sum of $2$-edge-connected components.  For each component $C$, let $C'$ be the graph obtained by contracting all but one edge in each serial class of its corresponding graphic matroid.\footnote{Two elements are in series in a graphic matroid if and only if they are in parallel in the cographic matroid. A pair of elements $\{e,f\}$ are in parallel in a matroid, if the set $\{e,f\}$ is a circuit. Being in series and being in parallel are equivalence relations, so the serial and parallel class of an element are well defined.
Contracting all but one edge in each serial class of the graphic matroid corresponds to deleting all but one element in each parallel class of the cographic matroid. For the specific case of the cographic matroid of a graph, a set of elements are \emph{in parallel} if each pair of them is a minimal edge cut of the graph.}
Each graph $C'$ is then $3$-edge-connected and, as before, we can find three bases $B_1$, $B_2$ and $B_3$ of the cographic matroid of $C'$ covering $E(C')$. The algorithm for $\M$ is as follows. Independently for each component $C$, select an index $i \in \{1,2,3\}$ uniformly at random and run the $e$-competitive algorithm of~\cite{babaioff_knapsack_2007} on the partition matroid that accepts at most one edge of $C$ from each parallel class of $\M$ represented in $B_i$ (discard every element of $C$ not represented in $B_i$). Since every element of the optimum base of $\M$ is the heaviest of its parallel class and each parallel class of $\M$ is selected with probability at least $1/3$, we conclude the previous algorithm is $3e$-competitive.

\begin{thm} For any cographic matroid $\M$, the previous algorithms is $3e$-competitive. Furthermore, if the graph $G$ associated to $\M$ is $3$-edge-connected, the algorithm is $3$-competitive.
\end{thm}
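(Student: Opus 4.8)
The plan is to prove both statements by the standard ``catch every element with good probability'' scheme: lower bound $\Pr_\pi[e\in\ALG]$ for each edge $e$, then invoke linearity of expectation, using that $\OPT$ is a subset of $E$ with nonnegative weights so that $w(\OPT)=\sum_{e\in\OPT}w(e)\le w(E)$.

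I would first handle the $3$-edge-connected case, where the only nontrivial ingredient is the existence of the three spanning trees $T_1,T_2,T_3$ whose cotrees $B_i=E\setminus T_i$ cover $E(G)$; each $B_i$, being the complement of a spanning tree, is a basis of the cographic matroid $\M$. By Edmonds's matroid covering theorem, $E$ can be covered by three independent sets of $\M$ iff $|X|\le 3\,r_\M(X)$ for every $X\subseteq E$, and any covering by independent sets extends greedily to a covering by bases. Writing the cographic rank through the corank identity $r_\M(X)=|X|-c(V,E\setminus X)+1$ and setting $c:=c(V,E\setminus X)$, the inequality reduces to $|X|\ge\tfrac32(c-1)$; contracting each of the $c$ components of $(V,E\setminus X)$ produces a $3$-edge-connected multigraph on $c$ vertices, all of whose edges belong to $X$, and such a multigraph has at least $3c/2\ge\tfrac32(c-1)$ edges. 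With the $B_i$ available, the simple algorithm — draw $i\in\{1,2,3\}$ uniformly and accept every arriving edge of $B_i$, which is always feasible since $B_i$ is independent (the algorithm never even reads the weights) — accepts each $e\in E$ with probability at least $1/3$, so $\E_\pi[w(\ALG)]\ge\tfrac13\sum_{e\in E}w(e)\ge\tfrac13 w(\OPT)$, i.e.\ it is $3$-competitive.

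For a general cographic matroid I would set up the decomposition exactly as in the text: delete the bridges (they are loops of $\M$, lie in no basis, and are simply never selected), split the rest into $2$-edge-connected components $C$, and within each $C$ contract all but one edge of every series class of the graphic matroid — equivalently, retain one representative per parallel class of $\M$ — to obtain a $3$-edge-connected graph $C'$, to which the previous paragraph supplies cotree bases $B_1^C,B_2^C,B_3^C$ of its cographic matroid covering $E(C')$. Two observations then close the argument. First, every $e\in\OPT$ is the heaviest element of its parallel class $P_e$ in $\M$ (parallel elements are exchangeable in any basis, so otherwise $\OPT-e+f$ would be a heavier basis for a heavier parallel $f$); moreover $P_e$ lies inside a single $2$-edge-connected component $C$, whose simplification keeps a representative of $P_e$, hence $P_e$ is represented in at least one $B_i^C$. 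Second, the uniform index drawn for that $C$ hits such an $i$ with probability at least $1/3$; conditioned on this, all of $P_e$ is retained, and the sub-instance restricted to the part $P_e$ is precisely the classical secretary problem on a capacity-one part, on which the algorithm of \cite{babaioff_knapsack_2007} returns the maximum of the part — which is $e$ — with probability at least $1/e$. Since the components are processed independently, $\Pr_\pi[e\in\ALG]\ge\tfrac13\cdot\tfrac1e$ for every $e\in\OPT$, whence $\E_\pi[w(\ALG)]\ge\tfrac1{3e}\sum_{e\in\OPT}w(e)=\tfrac1{3e}w(\OPT)$ and the algorithm is $3e$-competitive.

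I expect the single step that needs genuine work to be the existence of the $B_i$'s in the $3$-edge-connected case — converting ``$3$-edge-connected'' into the covering bound $|X|\le 3r_\M(X)$ via the contraction-and-degree-count above; everything afterwards is bookkeeping: keeping the two randomness sources (the uniform index per component, and the secretary sub-routine inside a part) independent, verifying that the retained part containing $e$ is exactly $P_e$ so that $e$ is its maximum, and noting that discarded parallel elements and deleted bridges cost us only through the $1/3$ factor already charged.
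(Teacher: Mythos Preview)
Your proposal is correct and follows the same approach as the paper: the same cotree-covering algorithm for the $3$-edge-connected case, and the same bridge-deletion, $2$-edge-connected decomposition, simplification to $C'$, and partition-matroid secretary routine for the general case, with the identical per-element probability bound $\Pr[e\in\ALG]\ge 1/(3e)$ for $e\in\OPT$. The only substantive addition is that you supply a self-contained verification of the covering condition $|X|\le 3\,r_\M(X)$ via the contraction-and-degree count, whereas the paper simply cites Edmonds's matroid partitioning theorem; everything else matches.
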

\subsection{Low Density Matroids}
A generalization of the previous algorithm is the following. Given a loopless matroid $\M=(E,\I)$ of density $\gamma(\M)$, the vector $\vec{v} \in \R^E$ having all its coordinates equal to $1/\gamma(\M)$ is feasible in the matroid polytope. In particular, this vector has a decomposition as convex combination of independent sets of $\M$: $\vec{v} = \sum_{I \in \I}\lambda_{I} \chi_{I}$, which we can find in polynomial time. The algorithm for matroid $\M$ will select an independent set $I \in  \I$ at random, according to probabilities $(\lambda_{I})_{I \in \I}$ and accept its elements without looking at their weights. Since every element $e$ is selected with probability $\sum_{I \in \I: I \ni e}\lambda_I = \vec{v}_e = 1/\gamma(\M)$, this algorithm is $\gamma(\M)$-competitive.

If matroid $\M$ contains parallel elements, we could get a better competitive ratio by considering the simple matroid $\M'=(E',\I')$ obtained by removing all but one edge in each parallel class of $\M$. By combining the output $I'$ of the previous algorithm applied on $\M'$ with the $e$-competitive algorithm for the partition matroid that selects one element in each parallel class represented in $I'$ (similar to what we did for cographic matroids), we obtain a $e\gamma(\M')$-competitive algorithm.

\begin{thm} For any matroid $\M$, the first algorithm described is $\gamma(\M)$-competitive, and the second algorithm is $\gamma(\M')e$-competitive.
\end{thm}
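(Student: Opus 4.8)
The two claims follow the template already sketched just before the statement; I would organize each around showing that a natural uniform vector lies in the matroid polytope, computing per-element acceptance probabilities, and (for the second algorithm) reducing a maximum-weight base to a choice of heaviest elements within parallel classes. For the first algorithm the plan is: (1) verify that the vector $\vec v\in\R^E$ with every coordinate equal to $1/\gamma(\M)$ lies in the matroid polytope of $\M$; (2) invoke Edmonds's description of that polytope as the convex hull of the incidence vectors of independent sets to obtain a decomposition $\vec v=\sum_{I\in\I}\lambda_I\chi_I$ (computable in polynomial time, e.g.\ by the ellipsoid method with the rank separation oracle, or by matroid union); and (3) conclude. Step (1) is immediate from the definition of density: for nonempty $S\subseteq E$ one has $r(S)\ge 1$ and $\sum_{e\in S}\vec v_e=|S|/\gamma(\M)\le r(S)$ exactly because $|S|/r(S)\le\gamma(\M)$, while nonnegativity is clear. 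For step (3): the algorithm outputs the random independent set $I$ with probability $\lambda_I$, so each $e\in E$ is accepted with probability $\sum_{I\ni e}\lambda_I=\vec v_e=1/\gamma(\M)$, whence $\E[w(\ALG)]=w(E)/\gamma(\M)\ge w(\OPT)/\gamma(\M)$, the inequality because $\OPT$, a maximum-weight base, is contained in $E$ and the weights are nonnegative. (There is no dependence on the arrival order here, so the expectation is only over the algorithm's internal coins.)

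For the second algorithm, let $P_1,\dots,P_m$ be the parallel classes of $\M$, let $h_j$ be a heaviest element of $P_j$, and identify the ground set $E'$ of $\M'$ with a transversal $\{f_1,\dots,f_m\}$ of these classes. The structural point is that a maximum-weight base of $\M$ takes at most one element from each class and always a heaviest one, so $\OPT=\{h_j:j\in S\}$ for some $S$ with $\{f_j:j\in S\}$ independent in $\M'$; in particular $w(\OPT)\le\sum_{j=1}^m w(h_j)$. Running the first algorithm on $\M'$ includes each representative $f_j$ in the output $I'$ --- equivalently, ``activates'' the class $P_j$ --- with probability exactly $1/\gamma(\M')$, using only internal coins. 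Conditioned on the set of active classes (whose representatives form an independent set of $\M'$, so picking one element per active class yields an independent set of $\M$), the restriction of the global uniform random arrival order to each $P_j$ is again uniform, so the $e$-competitive partition-matroid secretary of \cite{babaioff_knapsack_2007}, run on the partition matroid with one part per active class, accepts the heaviest element $h_j$ of part $j$ with probability at least $1/e$. As class activation and the secretary runs use disjoint sources of randomness, $\Pr[h_j\in\ALG]\ge 1/(e\gamma(\M'))$ for every $j$, and therefore $\E[w(\ALG)]\ge\sum_{j=1}^m w(h_j)\Pr[h_j\in\ALG]\ge\frac{1}{e\gamma(\M')}\sum_{j=1}^m w(h_j)\ge\frac{w(\OPT)}{e\gamma(\M')}$.

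Every individual step is routine; the part to handle with care is the bookkeeping in the second algorithm --- checking that the random independent set in $\M'$ and the per-part secretary runs are independent, that restricting a uniform random order to a subset remains uniform so that the $1/e$ guarantee of \cite{babaioff_knapsack_2007} applies within each part, and that choosing one element from each of a family of classes whose representatives are independent in the simplification indeed gives an independent set of $\M$. Once these are in place, the displayed chains of inequalities give exactly the competitive ratios $\gamma(\M)$ and $e\,\gamma(\M')$.
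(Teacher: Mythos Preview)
Your proposal is correct and follows the same approach as the paper, which presents its argument in the paragraphs preceding the theorem rather than in a separate proof. Your write-up is more detailed---you spell out the matroid-polytope verification, the independence of the two randomness sources, and the fact that a transversal of parallel classes whose representatives are independent in $\M'$ is independent in $\M$---but these are exactly the routine checks the paper leaves implicit when it says ``since every element $e$ is selected with probability $1/\gamma(\M)$'' and ``similar to what we did for cographic matroids.''
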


\subsection{Matroids with small cocircuits}
For each element $e$ of a loopless matroid $\M=(E,\I)$, let $c^*(e)$ be the size of the smallest cocircuit (i.e. circuits of the dual matroid) containing it, and let $c^*(\M)=\max_{e} c^*(e)$. Consider the algorithm that greedily construct an independent set of $\M$ selecting elements as they appear without looking at their weights. We claim this algorithm is $c^*(\M)$-competitive. To see this, fix an element $e \in E$ and let $C^*$ be a cocircuit of minimum size containing it. If $e$ appears before all the other elements of $C^*$ in the random order then it has to be selected by the algorithm. Otherwise, there would be a circuit $C$ that intersects $C^*$ only in element $e$, which is a contradiction (See, e.g. \cite[Proposition 2.1.11]{oxley2006matroid}). It is not hard to prove that for every matroid $\M$, $\gamma(\M)\leq c^*(\M)$. This mean that this algorithm is no better than the one for low density matroids. However, this algorithm is much simpler.

\begin{thm} For any matroid $\M$, the algorithm described above is $c^*(\M)$-competitive.
\end{thm}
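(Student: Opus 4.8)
The plan is to bound, for each element $e\in E$, the probability that the greedy online algorithm accepts $e$, and then combine these bounds with linearity of expectation over the weights.

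First I would fix an element $e\in E$ and let $C^*$ be a cocircuit of minimum size among all cocircuits containing $e$, so that $|C^*|=c^*(e)\le c^*(\M)$ (such a $C^*$ exists since $\M$ is loopless). The crucial claim is: \emph{if $e$ is the first element of $C^*$ to appear in the random arrival order, then the algorithm accepts $e$.} To prove this, let $S$ denote the independent set built by the algorithm immediately before $e$ is processed. Since by hypothesis $e$ precedes every other element of $C^*$, we have $S\cap C^*=\emptyset$. If $e$ were rejected, then $S\cup\{e\}$ would be dependent and hence contain a circuit $C$; because $S$ itself is independent, $e\in C$, and since $C\setminus\{e\}\subseteq S$ we would get $C\cap C^*=\{e\}$ — impossible, because a circuit and a cocircuit of a matroid cannot intersect in exactly one element \cite[Proposition~2.1.11]{oxley2006matroid}.

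Next, since the arrival order is uniformly random, $e$ is the first element of $C^*$ with probability exactly $1/|C^*|\ge 1/c^*(\M)$; thus $\Pr_\pi[e\in\ALG]\ge 1/c^*(\M)$. Fixing a maximum-weight independent set $\OPT$ and using that all weights are nonnegative,
\[
\E_\pi[w(\ALG)]=\sum_{e\in E}w(e)\,\Pr_\pi[e\in\ALG]\ \ge\ \frac1{c^*(\M)}\sum_{e\in\OPT}w(e)=\frac{w(\OPT)}{c^*(\M)},
\]
which is precisely the statement that the algorithm is $c^*(\M)$-competitive.

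The only real content — and the step I would be most careful about — is the circuit–cocircuit orthogonality argument: one must verify that the circuit $C$ produced when $e$ is added actually passes through $e$ (this uses that $S$ is independent) and avoids all of $C^*\setminus\{e\}$ (this uses $S\cap C^*=\emptyset$), so that the contradiction with \cite[Proposition~2.1.11]{oxley2006matroid} applies. The remaining ingredients — the symmetry computation $\Pr_\pi[e\text{ is first in }C^*]=1/|C^*|$ and linearity of expectation — are routine.
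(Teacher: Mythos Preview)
Your proof is correct and follows essentially the same approach as the paper: fix a minimum cocircuit $C^*$ through $e$, argue via circuit--cocircuit orthogonality (Oxley, Proposition~2.1.11) that $e$ must be accepted whenever it arrives first among $C^*$, and combine the resulting $1/|C^*|\ge 1/c^*(\M)$ bound with linearity of expectation. Your write-up is in fact more carefully fleshed out than the paper's sketch, which leaves the ``$e\in C$'' and ``$C\setminus\{e\}\subseteq S$'' verifications and the final expectation computation implicit.
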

\subsection{Column-sparse linear matroids}
Let $\M=(V,\I)$ be a linear matroid represented by a matrix $A$ containing at most $k$ non-zero values in each column.
Consider the following algorithm: Randomly permute the rows of $A$ and define for every row $i$, the sets $C_i =\{v \in V: v_i \neq 0\}$ and $B_i =C_i \setminus \bigcup_{j<i} C_j$, where $v_i$ denotes the $i$-th coordinate of column $v$ in the permuted matrix. Next, run the secretary algorithm for the partition matroid that accepts at most one element of each $B_i$. We claim that any set returned by this algorithm is independent in $\M$: If this was not the case there would be a circuit $C$ inside the output. Let $v \in C$ be the element belonging to the set $B_i$ of smallest index~$i$. By definition of $v$, the elements of $C\setminus v$ are not in $C_i$; therefore, $C$ and $C_i$ intersects only in $v$. This is a contradiction since $C_i$ is in the cocircuit space of the matroid (Use, e.g. \cite[Proposition 2.1.11]{oxley2006matroid}).

We claim the previous algorithm is $ke$-competitive. To see this, construct the bipartite graph $G$ with parts the rows and columns of $A$, where there is an edge $(j,v)$ if the corresponding entry of $A$ is non-zero. Assign to each edge a weight equal to the one of its associated column in $\M$. Consider the following simulation algorithm: Randomly permute the vertices in the row part of the graph. Delete all the edges, except the ones going from a column vertex to its lowest neighbor (the row having smallest index in the random permutation). Finally, run the secretary algorithm for the partition matroid that accepts for each row vertex, at most one edge incident to it. This returns a matching with the same weight as the set of elements the original algorithm returns.

Since for every independent set of columns, the number of row vertices that this set dominates in $G$ is at least its cardinality, Hall's Theorem implies that there is a matching covering each independent set. In particular the weight of the maximum weight matching $M^*$ in $G$ is at least the one of the optimum independent set of $\M$. On the other hand, $M^*$ has weight at most the one of the edge set $\{(i,v^*(i)): i \in \text{rows}(A)\}$, where $v^*(i)$ is the maximum weight neighbor of $i$ in $G$. Since each edge $(i,v^*(i))$ is not deleted with probability $1/k$  and, given it is not deleted, the simulation selects it with probability $1/e$, we conclude the original algorithm is $ke$-competitive.

\begin{thm} The previous algorithm is $ke$-competitive for matroids representable by matrices having only $k$ non-zero elements per column.
\end{thm}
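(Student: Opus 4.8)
The plan is to verify two properties of the proposed algorithm: that the set it outputs is always independent in $\M$, and that it recovers in expectation at least a $1/(ke)$ fraction of $\OPT_\M$.

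\textbf{Feasibility.} First I would argue that the output is independent. Suppose not, and let $C$ be a circuit of $\M$ contained in the returned set. Among the elements of $C$, pick the one, call it $v$, lying in the set $B_i$ of smallest index $i$. Since the partition matroid used in the last step accepts at most one element of each $B_i$, no other element of $C$ is in $B_i$; and by minimality of $i$, no other element of $C$ lies in any $B_j$ with $j<i$. As $B_i\subseteq C_i$ we have $v\in C_i$, while every $u\in C\setminus v$ lies in some $B_{i'}$ with $i'>i$ and hence $u\notin C_j$ for all $j<i'$, in particular $u\notin C_i$. Thus $C\cap C_i=\{v\}$. But $C_i$ is the support of the $i$-th row of the permuted matrix, a vector in the row space of $A$, i.e.\ in the cocircuit space of $\M$; a circuit and the support of any such vector cannot meet in exactly one element (orthogonality of the null space and the row space, or \cite[Prop.~2.1.11]{oxley2006matroid}), a contradiction.

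\textbf{Reduction to bipartite matching.} Next I would set up the bipartite graph $G$ on the rows and columns of $A$, with an edge $(j,v)$ of weight $w(v)$ whenever the $(j,v)$-entry of $A$ is nonzero; each column vertex has degree at most $k$. The simulation described in the text then permutes the rows uniformly at random, discards every edge except the one from each column to its lowest-indexed neighbor, and runs the $e$-competitive partition-matroid secretary algorithm of \cite{babaioff_knapsack_2007} keeping at most one surviving edge per row vertex. The crucial observation is that $B_i$ is exactly the set of columns whose surviving edge is incident to row $i$, so the matching $N$ returned by the simulation has the same joint distribution (in particular the same expected weight) as the element set returned by the original algorithm; it therefore suffices to lower bound $\E[w(N)]$.

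\textbf{Bounding the optimum and finishing.} Two comparisons remain. First, $w(\OPT_\M)\le w(M^*)$ where $M^*$ is a maximum weight matching of $G$: for any independent column set $S$, the submatrix on $S$ has rank $|S|$ and hence at least $|S|$ nonzero rows, so $S$ dominates at least $|S|$ row vertices, and Hall's theorem produces a matching saturating $\OPT_\M$, of weight $w(\OPT_\M)$. Second, $w(M^*)\le\sum_i w(v^*(i))$ where $v^*(i)$ is a heaviest neighbor of row $i$, since each row is matched at most once. To conclude, condition on the random row permutation: the edge $(i,v^*(i))$ survives the pruning iff row $i$ is the lowest neighbor of $v^*(i)$, which has probability at least $1/k$ because $v^*(i)$ has at most $k$ neighbors; conditioned on survival, $(i,v^*(i))$ is the heaviest surviving edge at row $i$, so the partition-matroid secretary algorithm selects it with probability at least $1/e$, independently of the survival event since the arrival order and the row permutation are independent randomizations. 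Summing over rows,
\[
\E[w(N)] \ \ge\ \frac{1}{ke}\sum_i w(v^*(i)) \ \ge\ \frac{1}{ke}\,w(M^*) \ \ge\ \frac{1}{ke}\,w(\OPT_\M),
\]
which is the claimed $ke$-competitiveness (for $k=2$, the incidence matrix of a graph, this recovers the $2e$-competitive algorithm of Korula and P\'al).

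I expect the last step to be the most delicate: one must cleanly separate the two sources of randomness --- the row permutation, which determines which edges survive, and the arrival order used by the secretary subroutine --- and invoke the per-part guarantee of the Babaioff et al.\ partition-matroid algorithm conditionally on the surviving edge set, using that $(i,v^*(i))$, whenever present, is automatically the heaviest edge of its part. The feasibility argument is short but hinges on the circuit/cocircuit orthogonality fact, and $w(M^*)\ge w(\OPT_\M)$ rests on the rank/Hall observation; both are standard but should be stated with care.
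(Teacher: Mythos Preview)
Your proof is correct and follows essentially the same route as the paper: the feasibility argument via circuit--cocircuit orthogonality, the bipartite simulation, the Hall's theorem bound $w(M^*)\ge w(\OPT_\M)$, the upper bound $w(M^*)\le\sum_i w(v^*(i))$, and the $1/k$ survival times $1/e$ selection analysis are all exactly what the paper does. Your write-up is in fact a bit more careful than the paper's in two places --- you say ``at least $1/k$'' (columns have \emph{at most} $k$ nonzeros) and you make explicit that $(i,v^*(i))$, when it survives, is necessarily the heaviest element of its part $B_i$, which is what justifies invoking the $1/e$ guarantee conditionally on the row permutation.
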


Note that by applying this algorithm to graphic matroids, which are representable by matrices having only 2 ones per column, we recover the $2e$ competitive algorithm of Korula and Pal \cite{korula2009algorithms}.

\subsection{A new $O(\log r)$ competitive algorithm for matroids}

Babaioff et al.~\cite{babaioff_matroids_2007} present an $O(\log r)$ competitive algorithm for general matroids of rank $r$. This algorithm has many features, including the fact that it does not need to know the matroid beforehand; it only needs to know the number of elements and have access to an oracle that test independence only on subsets of elements it has already seen. Nevertheless, this algorithm makes use of the actual values of the weights being revealed. We present an algorithm having the same features but that only uses the relative order of weights seen and not their numerical value.

If we are given the rank of the matroid, our algorithm is as follows. With probability 1/2, run the classical secretary algorithm that returns the heaviest element of the stream. Otherwise, observe the first $m$ elements of the stream, where $m$ is chosen from the binomial distribution $\Bin(n,1/2)$ (as usual, denote this set of elements as the \emph{sample}) and compute the optimum base $A=\{a_1,\dots,a_k\}$ (with $w(a_1)\geq \dots \geq w(a_k)$) of the sampled elements. Afterwards, select a number $\ell \in \{1,3,9,\dots,3^t\}$ with $t=\lfloor \log_3 r\rfloor$ uniformly at random, run the greedy procedure on the set of non-sampled elements having weight at least the one of $a_{\ell}$ as they arrive and return its answer (if $\ell > k$, run the greedy procedure over the entire set of non-sampled elements). If we are not given the rank of the matroid beforehand, we select $t \in \{\lfloor \log_3 k\rfloor, \lfloor \log_3 k\rfloor +1\}$ uniformly and use this value in the previous algorithm.

The optimum of the sample is similar to the optimum of the nonsampled part: For any number $\ell$ the algorithm can choose, there is an independent set of size close to $\ell$ outside the sample with every element heavier than $a_\ell$ (with high probability); therefore, the greedy procedure recovers a weight of roughly $\ell w(a_\ell)$. By taking the expectation over the choices of $\ell$ it is not hard to check that the expected weight returned by the algorithm is at least $\Omega(\E[w(A)/ \log_3(r)])=\Omega(\E[w(\OPT)/ \log_3(r)])$. We give the formal proof below.

\begin{thm}\label{thm:log}
  The algorithm described above is $O(\log r)$-competitive for any matroid of rank $r$.
\end{thm}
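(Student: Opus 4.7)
The plan is to reduce everything to a geometric decomposition of the offline optimum. Let $O=\{o_1,\dots,o_r\}$ be the maximum-weight base of $\M$ with $w(o_1)\geq\cdots\geq w(o_r)$, $W=w(O)=\E_\sigma[w(\OPT(\sigma))]$, and $t=\lfloor\log_3 r\rfloor$. Partition the indices into blocks $B_i=\{j:3^i\leq j<3^{i+1}\}$, $i=0,\dots,t$; since $|B_i|\leq 2\cdot 3^i$ and $w(o_j)\leq w(o_{3^i})$ for $j\in B_i$,
\begin{equation*}
  W \;\leq\; 2w(o_1)\;+\;2\sum_{i=1}^{t} 3^{i}\,w(o_{3^i}).
\end{equation*}
The classical-secretary branch (chosen with probability $1/2$) returns $o_1$ with probability $\geq 1/e$ and so contributes at least $w(o_1)/(2e)$ to $\E[\ALG]$, absorbing the first term. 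The rest reduces to showing that for each $i\in\{1,\dots,t\}$, selecting $\ell=3^{i-1}$ in the thresholding branch delivers expected output weight $\Omega(3^i w(o_{3^i}))$; multiplied by the prior $\tfrac12\cdot\tfrac{1}{t+1}$ and summed, this gives $\E[\ALG]=\Omega(W/(t+1))=\Omega(W/\log r)$.

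The per-$\ell$ bound is the core. Since $m\sim\Bin(n,\tfrac12)$ makes $S$ contain each element of $E$ independently with probability $1/2$, I fix $i\geq 1$, set $\ell=3^{i-1}$ and $T_s=\{o_1,\dots,o_s\}$, and consider the event
\begin{equation*}
  \mathcal{E}\;=\;\bigl\{|S\cap T_{\ell}|\leq \ell/2\bigr\}\cap\bigl\{|S\cap (T_{3\ell}\setminus T_{\ell})|\geq \ell\bigr\}.
\end{equation*}
The two factors depend on disjoint index sets (hence are independent) and each have probability at least $1/2$ by the median characterization of symmetric binomials, giving $\Pr[\mathcal{E}]\geq 1/4$. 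On $\mathcal{E}$ the sample contains at least $\ell$ elements of $T_{3\ell}$, and since each element of $O$ remains independent in $\M$, the weight-ordered greedy that outputs $A$ always picks $O\cap S$; thus $A$ contains $\geq\ell$ elements of weight $\geq w(o_{3\ell})$, whence $w(a_\ell)\geq w(o_{3\ell})$. For the matching upper bound I will record a general matroid dominance lemma: for any $S'\subseteq E$, the decreasing-weight sequence of a max-weight base of $S'$ is dominated element-wise by that of $O$. The proof is a one-line sweep: if $w(a_j)>w(o_j)$ then $\{a_1,\dots,a_j\}$ is a set of $j$ independent elements all of weight $>w(o_j)$, forcing the weight-ordered greedy on $E$ to select $\geq j$ such elements before reaching $o_j$, contradicting the definition of $o_j$. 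Applied with $j=\ell$ this yields $w(a_\ell)\leq w(o_\ell)$.

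Combining, on $\mathcal{E}$ the threshold $w(a_\ell)$ lies in $[w(o_{3\ell}),w(o_\ell)]$, and $\bar S\cap T_\ell$ consists of at least $\ell/2$ independent elements, each of weight $\geq w(o_\ell)\geq w(a_\ell)$. The greedy run on $\bar S\cap\{e:w(e)\geq w(a_\ell)\}$ in arrival order rejects only elements lying in the span of the already-chosen set, so its output is a base of that set, of size $\geq r_\M(\bar S\cap T_\ell)\geq\ell/2$, with each selected element weighing $\geq w(o_{3\ell})$. The expected output weight conditional on entering the thresholding branch and choosing $\ell$ is therefore $\geq \Pr[\mathcal{E}]\cdot(\ell/2)\cdot w(o_{3\ell})\geq(1/24)\cdot 3^i w(o_{3^i})$, closing the per-$\ell$ bound. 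For the unknown-rank variant, $k=r_\M(S)\geq r/2$ holds with probability $\geq 1/2$ (by the binomial median applied to $|S\cap O|$), hence $\lfloor\log_3 k\rfloor\in\{\lfloor\log_3 r\rfloor-1,\lfloor\log_3 r\rfloor\}$ with that probability, and the uniform choice $t\in\{\lfloor\log_3 k\rfloor,\lfloor\log_3 k\rfloor+1\}$ hits the correct grid with constant probability, at the cost of one more constant factor.

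The main obstacle is the two-sided control of $w(a_\ell)$. The lower bound drops out of $O\cap S\subseteq A$, but the upper bound requires the matroid dominance lemma sketched above, which is not stated explicitly elsewhere in the paper and must be recorded along the way. Once that lemma is in hand, the rest of the proof is a routine assembly of binomial tail estimates and the well-known fact that the greedy in arrival order returns a base of the set it processes.
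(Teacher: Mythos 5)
Your proof is correct, and it follows the same high-level strategy as the paper (classical secretary branch plus a sample-and-threshold branch indexed by powers of $3$), but the accounting is genuinely different in a way worth noting. The paper first establishes $\E[w(A)]\geq w(\OPT)/2$ and then charges the algorithm's output against the random quantities $\E[w(a_{3^j})]$, which requires bounding $\E[w(a_\ell)\cdot|\ALG|\mid\ell]$; as written the paper multiplies $\E[w(a_\ell)]$ by a tail bound on the number of nonsampled elements of $T_\ell$, implicitly treating $w(a_\ell)$ and $|\bar S\cap T_\ell|$ as if uncorrelated, which is delicate (more of $T_\ell$ outside the sample tends to push $w(a_\ell)$ down). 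You sidestep this entirely by sandwiching the random threshold between two \emph{deterministic} quantities, $w(o_{3\ell})\le w(a_\ell)\le w(o_\ell)$, the upper bound holding always (your dominance lemma, which is the leximax optimality of matroid greedy and which the paper also uses, unstated, when it asserts $w(e_\ell)\ge w(a_\ell)$) and the lower bound holding on the event $\mathcal{E}$ whose two halves concern disjoint index blocks and are therefore independent. You also charge directly against the geometric decomposition $w(\OPT)\le 2w(o_1)+2\sum_i 3^i w(o_{3^i})$ rather than passing through $\E[w(A)]$, and you replace the Chernoff estimates with binomial median facts, which tightens the constants and shortens the argument. Your unknown-rank reduction via $k\ge r/2$ with probability $\ge 1/2$ is the same idea as the paper's $k\ge r/3$ step. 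In short: same architecture, but your bookkeeping via the two-sided control of $w(a_\ell)$ is cleaner and more airtight than the paper's, and the dominance lemma you isolate is exactly the missing step the paper leaves implicit.
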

\begin{proof}
Assume first that the rank $r$ of the matroid is known. Let $\OPT=\{e_1,\ldots,e_r\}$ with $w(e_1)\geq \dots \geq w(e_r)$ be the maximum independent set of the matroid and $A=\{a_1,\dots,a_k\}$ be the optimum of the sample, i.e., the optimum of the first $\Bin(n,1/2)$ elements in the stream (independent of whether the algorithm computes $A$ or not). Note that every element of the matroid is sampled independently with probability $1/2$, including the elements of the optimum. Therefore,

\begin{equation}
\E[w(A)]\geq \frac{w(\OPT)}{2}. \label{eqn:OPT2}
\end{equation}

To simplify our analysis, in the following we assume that for $i>k$, $a_i$ is a dummy element with $w(a_i)=0$. Given the number $\ell$ chosen by the algorithm (if the algorithm reaches that state), the weight of the set returned will be at least $w(a_\ell)$ times the number of elements the greedy procedure selects; therefore, $\E[w(\ALG)]$ is at least
\begin{align*}
\frac{w(e_1)}{2e} + \frac{1}{2(1+\lfloor \log_3r\rfloor)}\sum_{j = 0}^{\lfloor \log_3r\rfloor}\E\left[w(a_\ell)\cdot |\ALG|\, \big|\ \text{$\ell = 3^j$ was selected}\right].
\end{align*}

Let $H(a_\ell)$ be the collection of non-sampled elements that are heavier than $a_\ell$. If the algorithm chooses the number $\ell$, it will then execute the greedy procedure on $H(a_\ell)$ and return a set of cardinality equal to the rank of $H(a_\ell)$. Note that for every~$\ell$, $w(e_\ell)\geq w(a_\ell)$; therefore, the rank of $H(a_\ell)$ is at least the number of nonsampled elements in $\{e_1,\ldots,e_\ell\}$. By Chernoff bound, the probability that this last quantity is smaller than $\ell/4$ is at most $\exp(-\ell/8)$.

In particular, if $\ell\geq 9$,  $\E[w(a_{\ell})\cdot |\ALG|\, \big|\ \ell] \geq \E[w(a_\ell)](1-\exp(-\ell/8)) \ell/4\geq \E[w(a_\ell)]\ell/6$. Therefore,
\begin{align*}
\E[w(\ALG)] &\geq \frac{w(e_1)}{2e} + \frac{1}{12(1+\lfloor \log_3r\rfloor)}\sum_{j = 2}^{\lfloor \log_3r\rfloor} \E[w(a_{3^j})]3^j\\
&\geq \E\left[ \frac{w(a_1,\dots,a_8)}{16e} + \frac{1}{24(1+\lfloor \log_3r\rfloor)}\sum_{j = 2}^{\lfloor \log_3r\rfloor}w(a_{3^j},\dots,a_{3^{j+1}-1}) \right]\\
&\geq \frac{\E[w(A)]}{16e(1+\lfloor \log_3 r\rfloor)}.
\end{align*}
Using inequality \eqref{eqn:OPT2}, we get
\begin{align*}
\E[w(\ALG)] &\geq \frac{w(\OPT)}{32e(1+\lfloor \log_3 r\rfloor)},
\end{align*}
which implies the algorithm is $O(\log r)$-competitive.

Suppose now that the rank $r$ is unknown. If $r$ is small, say $r \leq 12$, then with probability $1/(2e)$ the algorithm will run the standard secretary algorithm and return the top element of the matroid. This element has weight at least $1/12$ fraction of the optimum; therefore the algorithm is $24e$-competitive for this case.

For the case where $r>12$ we use a different analysis.  The random variable $k$ denoting the rank of the sampled set could be strictly smaller than $r$. However, the probability that $k\leq r/3$ is small. Indeed, for that event to happen we require that at most $1/3$ of the elements of $\OPT$ are in the sample. By Chernoff bound, this happens with probability $\exp(-r/18)\leq \exp(-13/18)\leq 1/2$. Noting that $r/3\leq k \leq r$ implies that $\lfloor \log_3 r \rfloor \in \{\lfloor \log_3 k \rfloor, \lfloor \log_3 k \rfloor +1\}$, we deduce that with probability at least 1/4 our algorithm guesses $t=\lfloor \log_3 r \rfloor $ right; therefore, the competitive ratio of this algorithm is at most 4 times worse than the one that knows the rank beforehand.
\end{proof}
\bibliographystyle{abbrv}

\bibliography{matsec}

\end{document}